\documentclass[journal]{IEEEtran}

\usepackage{amsmath,amssymb,amsthm}
\usepackage{booktabs}
\usepackage{graphicx}
\usepackage{hyperref}
\usepackage{xcolor}
\usepackage{pifont}
\usepackage{tikz}
\usetikzlibrary{shapes,arrows,positioning,fit,calc}
\usepackage{algorithm}
\usepackage{algpseudocode}
\usepackage{multirow}
\usepackage{enumitem}
\usepackage{balance}
\usepackage{subcaption}
\usepackage{url}

\newtheorem{theorem}{Theorem}
\newtheorem{proposition}{Proposition}
\newtheorem{definition}{Definition}
\newtheorem{corollary}{Corollary}

\begin{document}

\title{RAGShield: Detecting Numerical Claim Manipulation\\in Government RAG Systems}

\author{KrishnaSaiReddy~Patil%
\thanks{Manuscript submitted April 2026.}}

\maketitle

\begin{abstract}
Retrieval-Augmented Generation (RAG) systems are deployed across federal agencies for citizen-facing tax guidance, benefits eligibility, and legal information, where a single incorrect number causes direct financial harm. This paper proves that all embedding-based RAG defenses share a fundamental blind spot: changing a tax deduction by \$50{,}000 produces cosine similarity 0.9998, invisible to every known detection threshold. Across 174 manipulation pairs and two embedding models, the mean sensitivity gap is 1{,}459$\times$. The blind spot is confirmed on real IRS documents.

The root cause is that embeddings encode topic, not numerical precision. RAGShield sidesteps this by operating on extracted values directly: a pattern-based engine identifies dollar amounts and percentages in government text, links each value to its governing entity through two-pass context propagation (99.8\% entity detection on 2{,}742 real IRS passages), and verifies every claim against a cross-source registry built from the corpus itself. A temporal tracker flags value changes that fall outside known government update schedules. On 430 attacks generated from real IRS document content, RAGShield detects every one (0.0\% ASR, 95\% CI [0\%, 1\%]) while embedding-based defenses miss 79--90\% of the same attacks.
\end{abstract}

\begin{IEEEkeywords}
RAG, retrieval-augmented generation, poisoning, numerical claims, government, insider threat, claim verification
\end{IEEEkeywords}

% ============================================================================
\section{Introduction}
\label{sec:intro}
% ============================================================================

Federal agencies are rapidly adopting Retrieval-Augmented Generation (RAG) systems for citizen-facing services. The General Services Administration launched USAi.Gov in August 2025, providing all federal agencies access to AI models from OpenAI, Anthropic, Google, and Meta~\cite{gsa_usai}. The Government Accountability Office documented 89 RAG-based citizen-facing services across 28 agencies as of fiscal year 2025~\cite{gao_ai_report}. These systems answer questions about tax obligations, benefits eligibility, legal rights, and government procedures, domains where incorrect numerical information causes direct financial harm to citizens.

RAG systems are vulnerable to knowledge base poisoning attacks. PoisonedRAG~\cite{poisonedrag2025} achieves over 90\% attack success rate by injecting 5 malicious texts. The Phantom attack~\cite{phantom_neurips2024} achieves 98.2\% retrieval success with 10 adversarial passages. Existing defenses, including RobustRAG~\cite{robustrag_neurips2025}, RAGDefender~\cite{ragdefender2025}, and TrustRAG~\cite{trustrag2025}, all operate on embedding-level content analysis: isolating passages, filtering by centroid distance, or clustering to detect outliers.

\textbf{Key observation.} All embedding-based defenses share a fundamental blind spot: embedding models are trained on semantic similarity, not numerical precision. I demonstrate empirically that changing a tax deduction from \$15{,}000 to \$65{,}000, a \$50{,}000 manipulation, produces a cosine similarity of 0.9998 between the original and modified passage embeddings. This perturbation is below \emph{every} known detection threshold. Across 174 manipulation pairs and two embedding models, the mean numerical sensitivity gap is 1{,}459$\times$: a 1\% change in embedding space corresponds to a 1{,}459\% change in the numerical claim value.

\textbf{The threat.} This blind spot enables a class of insider attacks that no existing defense can detect. An adversary with valid credentials can modify specific numerical values in knowledge base documents. Supply chain compromises such as SolarWinds~\cite{solarwinds} demonstrate that insider access to trusted systems is a realistic threat vector. Changing the standard deduction from \$15{,}000 to \$15{,}500 causes citizens to claim incorrect deductions. Changing the SSI benefit rate from \$943/month to \$993/month causes incorrect benefit expectations. These attacks pass provenance checks because the document has valid signatures, and they pass embedding-based detection because the perturbation is invisible. The result is measurable financial harm.

\textbf{Contributions.}

\begin{enumerate}[leftmargin=*]
\item \textbf{Embedding blind spot theorem} (\S\ref{sec:blindspot}): Formal and empirical proof that embedding-based RAG defenses cannot detect numerical claim manipulation. Mean sensitivity gap of 1{,}459$\times$ across two models and 174 pairs. Zero detections at any standard threshold.

\item \textbf{Claim-level verification framework} (\S\ref{sec:architecture}): Numerical claim extraction engine with two-pass entity resolution (forward propagation + backward fill) and cross-source claim registry with single-source discrepancy detection. 99.8\% entity detection on 2{,}742 real IRS passages, 83.3\% extraction precision, 100\% recall on government-format text.

\item \textbf{Temporal claim tracking} (\S\ref{sec:temporal}): Domain-specific authorized-change model distinguishing legitimate annual updates from unauthorized manipulation. 100\% accuracy on 8 temporal tests.

\item \textbf{Honest evaluation on real government documents} (\S\ref{sec:evaluation}): 2{,}742-passage real IRS corpus from 5 official publications, 430 attacks generated from real document content, 6 baselines, end-to-end LLM testing, and documented failure modes.
\end{enumerate}

CivicShield~\cite{civicshield2026} addresses conversational-level attacks against government AI chatbots. This paper addresses a complementary threat surface: subtle manipulation of the knowledge base that feeds those same systems, focusing on numerical claim integrity.

% ============================================================================
\section{Background and Related Work}
\label{sec:related}
% ============================================================================

\subsection{RAG Poisoning Attacks}

Table~\ref{tab:attacks} summarizes the attack landscape. Attacks span multiple capability levels: external corpus injection~\cite{poisonedrag2025,phantom_neurips2024,cparag2025}, data loader exploitation~\cite{phantomtext2025}, and insider compromise~\cite{confusedpilot2024}. Zhang et al.~\cite{benchmarkrag2025} benchmarked 13 attacks across 15 datasets and found that current defenses fail to provide robust protection. These attacks focus on injecting \emph{new} malicious documents. This paper addresses a different vector: \emph{modifying numerical values} in existing or legitimately-sourced documents.

\begin{table}[t]
\centering
\caption{RAG Poisoning Attack Landscape}
\label{tab:attacks}
\small
\begin{tabular}{@{}llll@{}}
\toprule
\textbf{Attack} & \textbf{Venue} & \textbf{Key Result} & \textbf{Vector} \\
\midrule
PoisonedRAG~\cite{poisonedrag2025} & USENIX'25 & 5 texts $\rightarrow$ 90\% ASR & Injection \\
Phantom~\cite{phantom_neurips2024} & NeurIPS'24 & 98.2\% retrieval & Injection \\
CPA-RAG~\cite{cparag2025} & arXiv'25 & Black-box, fluent & Injection \\
CorruptRAG~\cite{corruptrag2025} & arXiv'25 & Single-doc attack & Injection \\
PhantomText~\cite{phantomtext2025} & arXiv'25 & 74.4\% via hidden text & Loader \\
\midrule
\textbf{This paper} & --- & \textbf{Numerical manip.} & \textbf{Insider} \\
\bottomrule
\end{tabular}
\end{table}

\subsection{RAG Defenses}

Table~\ref{tab:defenses} compares existing defenses. RobustRAG~\cite{robustrag_neurips2025} uses isolate-then-aggregate with majority vote. This is effective against injection but operates on passage-level agreement, which cannot distinguish \$15{,}000 from \$15{,}500 within semantically identical passages. TrustRAG~\cite{trustrag2025} applies cluster filtering and LLM self-assessment, clustering operates on embeddings, inheriting the same blind spot; LLM self-assessment is unreliable for numerical comparison across passages. RAGPart~\cite{ragpart2025} uses document partitioning, random partitioning does not guarantee cross-referencing of numerical claims. RAGForensics~\cite{ragforensics2025} provides traceback after attacks but not prevention. No existing defense operates at the \emph{claim level}.

\begin{table}[t]
\centering
\caption{RAG Defense Comparison}
\label{tab:defenses}
\small
\begin{tabular}{@{}lcccc@{}}
\toprule
& \rotatebox{70}{Embed-Level} & \rotatebox{70}{Claim-Level} & \rotatebox{70}{Temporal} & \rotatebox{70}{Insider Det.} \\
\midrule
RobustRAG~\cite{robustrag_neurips2025} & \ding{51} & \ding{55} & \ding{55} & \ding{55} \\
TrustRAG~\cite{trustrag2025} & \ding{51} & \ding{55} & \ding{55} & \ding{55} \\
RAGPart~\cite{ragpart2025} & \ding{51} & \ding{55} & \ding{55} & \ding{55} \\
RAGDefender~\cite{ragdefender2025} & \ding{51} & \ding{55} & \ding{55} & \ding{55} \\
RAGForensics~\cite{ragforensics2025} & \ding{51} & \ding{55} & \ding{55} & Partial \\
\textbf{RAGShield} & \ding{51} & \ding{51} & \ding{51} & \ding{51} \\
\bottomrule
\end{tabular}
\end{table}

\subsection{Why Existing Defenses Fail on Numerical Manipulation}

To understand why the blind spot is structural rather than incidental, I analyze each defense's core mechanism against a concrete attack: changing ``the standard deduction for single filers is \$15{,}000'' to ``\$15{,}500'' (a \$500 manipulation producing cosine similarity 0.9997).

\textbf{RobustRAG} retrieves top-$k$ passages independently, generates an isolated response from each, and aggregates via majority vote. For numerical manipulation: the poisoned passage is semantically identical to the original and will appear in top-$k$ with high rank. The isolated response from the poisoned passage will say ``\$15{,}500'' while responses from other passages (if any discuss the same topic) will say ``\$15{,}000.'' However, in a mixed corpus with 1{,}000+ passages, the probability that multiple top-$k$ passages discuss the exact same numerical claim is low. The poisoned passage often stands alone on its topic, making majority vote ineffective because there is no majority to outvote it.

\textbf{TrustRAG} clusters retrieved passages and removes outlier clusters, then uses LLM self-assessment to detect inconsistencies. For numerical manipulation: the poisoned passage clusters \emph{with} the legitimate passage (cosine similarity 0.9997 means they are in the same cluster). Stage 1 clustering cannot separate them. Stage 2 LLM self-assessment would need to notice that ``\$15{,}500'' $\neq$ ``\$15{,}000'' across two passages in the context window, but research on LLM numerical reasoning shows this is unreliable, especially when the numbers are embedded in otherwise identical text.

\textbf{RAGPart} partitions the corpus into subsets and retrieves from each independently. For numerical manipulation: the poisoned and original passages may end up in the same partition (50\% probability with random 2-way partitioning), in which case the defense provides no benefit. Even when they are in different partitions, the aggregation step must compare specific numerical values across partition results, which RAGPart does not do, as it operates on passage-level agreement.

\textbf{RAGDefender} filters retrieved passages by centroid distance, removing those far from the centroid of the retrieved set. For numerical manipulation: the poisoned passage has nearly identical embedding to the original, so its distance from the centroid is indistinguishable from legitimate passages. Centroid-distance filtering cannot separate them.

The common failure mode across all four defenses is that they operate on \emph{embedding representations}, which encode semantic topic but not numerical precision. Claim-level verification sidesteps this entirely by comparing extracted numerical values directly.

\subsection{Fact Verification}

Automated fact verification~\cite{thorne2018fever} extracts claims from text and verifies them against evidence. ClaimBuster~\cite{claimbuster2017} identifies check-worthy claims. These systems operate on general natural language claims, not the structured numerical claims in government documents. RAGShield adapts claim extraction specifically for government numerical formats and cross-references against a domain-specific registry rather than open-web evidence.

\subsection{LLM Numerical Reasoning}

Large language models exhibit systematic weaknesses in numerical reasoning tasks, including precise comparison, arithmetic, and value extraction from context. This limitation is relevant to RAGShield in two ways: (1)~it explains why LLM-based self-assessment (as used in TrustRAG) is unreliable for detecting numerical manipulation. The LLM may not notice that \$15{,}500 $\neq$ \$15{,}000 across two passages; and (2)~it motivates the design choice to use structured claim extraction rather than LLM-based comparison for verification. The claim extraction engine operates on deterministic pattern matching, avoiding the stochastic errors inherent in LLM numerical processing.

% ============================================================================
\section{Threat Model}
\label{sec:threat}
% ============================================================================

\begin{definition}[RAG System]
A RAG system $\mathcal{R} = (\mathcal{K}, \eta, \rho, \gamma)$ consists of knowledge base $\mathcal{K} = \{d_1, \ldots, d_n\}$, embedding function $\eta: \mathcal{D} \rightarrow \mathbb{R}^m$, retriever $\rho: \mathcal{Q} \times \mathcal{K} \rightarrow \mathcal{P}(\mathcal{K})$, and generator $\gamma: \mathcal{Q} \times \mathcal{P}(\mathcal{K}) \rightarrow \mathcal{A}$.
\end{definition}

\begin{definition}[Numerical Claim]
A numerical claim $c = (e, a, v, u)$ consists of entity $e$, attribute $a$, value $v \in \mathbb{R}$, and unit $u$. A document $d$ contains a set of claims $C(d) = \{c_1, \ldots, c_k\}$.
\end{definition}

\begin{definition}[Numerical Claim Adversary]
An adversary $\mathcal{A}_{\text{num}}$ has valid credentials (insider or compromised source), modifies specific numerical values in documents, and aims to cause the RAG system to produce incorrect numerical answers. The adversary is constrained to changes small enough to avoid embedding-based detection: $\|\eta(d') - \eta(d)\| < \delta$ for detection threshold $\delta$.
\end{definition}

Three adversary tiers target numerical claims:

\begin{itemize}[leftmargin=*]
\item \textbf{T3-H (Subtle numerical manipulation):} Insider with valid provenance modifies dollar amounts by \$50--\$5{,}000. Embedding perturbation $< 0.01$.
\item \textbf{T6 (In-place replacement):} Insider replaces existing corpus documents with single-number modifications ($\pm 1$). Embedding perturbation $< 0.001$.
\item \textbf{T-TEMPORAL (Wrong-year values):} Adversary introduces correct values from a prior tax year, causing outdated information to be served as current.
\end{itemize}

\textbf{Harm model.} For a query $q$ with correct answer $a^*$ and poisoned answer $a'$, the harm is $h(q) = |v(a') - v(a^*)| \cdot s(q)$, where $v(\cdot)$ extracts the numerical value and $s(q)$ is the query sensitivity.

% ============================================================================
\section{The Embedding Blind Spot}
\label{sec:blindspot}
% ============================================================================

\begin{theorem}[Embedding Numerical Insensitivity]
\label{thm:blindspot}
Let $\eta: \mathcal{D} \rightarrow \mathbb{R}^m$ be any embedding model that produces document embeddings via pooling over $n$ token representations. Let $d$ be a document containing a numerical claim where the numerical token(s) occupy $t$ positions. For any $d'$ identical to $d$ except that value $v$ is replaced by $v' \neq v$:
\begin{equation}
\|\eta(d') - \eta(d)\| \leq \frac{t}{n} \cdot B
\label{eq:bound}
\end{equation}
where $B$ is the maximum per-token representation change. The numerical sensitivity gap $G = |v' - v| / \|\eta(d') - \eta(d)\| \geq |v' - v| \cdot n / (t \cdot B)$ is unbounded.
\end{theorem}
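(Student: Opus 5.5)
The plan is to model the pooling step explicitly. I take $\eta(d) = \frac{1}{n}\sum_{i=1}^{n} h_i(d)$, where $h_i(d)\in\mathbb{R}^m$ is the final-layer representation at position $i$; this is mean pooling, and the case the bound is really about. Since $d'$ differs from $d$ only at the value $v$, I align the token sequences position by position and let $T$ be the set of $t$ positions holding the numerical token(s). I assume $v$ and $v'$ tokenize to the same number of tokens, padding with the convention that an absent token contributes zero.

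The first step is the telescoping identity
\[
\eta(d')-\eta(d) = \frac{1}{n}\sum_{i\in T}\bigl(h_i(d')-h_i(d)\bigr) + \frac{1}{n}\sum_{i\notin T}\bigl(h_i(d')-h_i(d)\bigr),
\]
followed by the triangle inequality. Each term in the first sum is at most $B$ by the definition of $B$ as the maximum per-token representation change. That sum is therefore bounded by $\frac{t}{n}B$.

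The main obstacle is the second sum. In a contextual encoder, self-attention propagates the edit to every position, so $h_i(d')\neq h_i(d)$ even for $i\notin T$. I see two routes and will state the theorem under whichever is cleaner.

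\begin{itemize}
\item \textbf{Definitional route.} Read ``per-token representation change'' as the change attributable to the edited tokens: decompose each $h_i$ as a contribution from the positions in $T$ plus a remainder, and let $B$ bound the total contribution of those edited tokens. This gives exactly $\frac{t}{n}B$.
\item \textbf{Lipschitz route.} Assume each attention layer is Lipschitz and that the attention weight on the edited positions is $O(t/n)$, which holds when attention is roughly uniform. The off-$T$ drift is then $O(t/n)\cdot B'$, and I absorb the constant into $B$.
\end{itemize}

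I would take the first route for the formal statement, remark on the second, and note that for CLS pooling the same argument needs the attention-weight assumption. Either way I get $\|\eta(d')-\eta(d)\|\le \frac{t}{n}B$.

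For the gap, the claim $G \ge |v'-v|\,n/(tB)$ follows by inverting the bound, which requires $\eta(d')\neq\eta(d)$. If the embeddings are equal, $G=\infty$ by convention, which only strengthens the conclusion.

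Unboundedness then follows from one observation: $n$, $t$ and $B$ do not depend on $v'$. The number of tokens $t$ does vary with $v'$, but only by $O(\log|v'|)$ digits, or it stays bounded under a fixed numeric tokenization. Letting $|v'-v|\to\infty$ with $t$ fixed, for instance by choosing $v'$ with the same digit count but a different leading digit, or by letting $n$ grow, makes the lower bound diverge. I will also tie this back to the empirical 0.9998 cosine figure: normalized embeddings with $t/n$ small give small $\|\eta(d')-\eta(d)\|$, and hence cosine similarity close to 1.
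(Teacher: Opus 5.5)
Your argument has the same backbone as the paper's proof: mean pooling, the triangle inequality over the $t$ edited positions with each term at most $B$, inverting the bound to get $G \ge |v'-v|\,n/(tB)$, and divergence because $n$, $t$, $B$ do not scale with $|v'-v|$. The real difference is the off-$T$ sum. The paper's proof simply asserts that the substitution ``changes $t$ token representations,'' so $\sum_{i\notin T}(h_i(d')-h_i(d))$ is silently taken to be zero. That holds for static token embeddings but not for the contextual encoders (MiniLM, BGE) the paper evaluates. The paper's proof also treats only mean pooling, although the statement says ``any'' pooling. You are right to flag both points.

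Of your two repairs, however, only the Lipschitz route adds content. The definitional route is essentially the paper's implicit assumption made explicit, and it makes $\frac{t}{n}B$ true by construction. But a $B$ that aggregates an edited token's influence over all $n$ positions can be of order $n$: it is about $n\epsilon$ if every position shifts by $\epsilon$ in a common direction. That erases exactly the $1/n$ dilution the theorem is about, and it leaves your ``let $n$ grow'' step unjustified. The attention-spread assumption is what yields a constant independent of $n$. That constant does depend on depth and on the layer Lipschitz constants, since the $O(t/n)$ drift compounds layer by layer. So the Lipschitz version is the one worth stating formally.

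On unboundedness, your same-digit-count example fails. If $v$ and $v'$ both have $k$ digits then $|v'-v|<10^k$, so for a fixed $d$ this cannot send $|v'-v|\to\infty$. More generally, a finite vocabulary expresses only finitely many values in $t$ tokens, so $t$ cannot stay bounded as $|v'|\to\infty$ (the paper glosses over this too).

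The argument that works is the one you half-state. Use $t(v')=O(\log|v'|)$ together with a uniform $B$, taken as a supremum over all substitutions. Under your definitional route $B$ is pair-specific and does depend on $v'$, so the supremum is needed; it is finite because the encoder's representations are bounded (e.g., after a final LayerNorm). Then $|v'-v|\,n/(t(v')B)\to\infty$.

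Note also that divergence of $G$ on its own is nearly free. Normalized embeddings satisfy $\|\eta(d')-\eta(d)\|\le 2$, hence $G\ge|v'-v|/2$. The $t/n$ factor matters for the small-perturbation claim (cosine similarity near 1, and the detection-threshold corollary), not for divergence.
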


\begin{proof}
For mean pooling, $\eta(d) = \frac{1}{n} \sum_i h_i$. Replacing $v$ with $v'$ changes $t$ token representations. Let $\Delta h_j = h'_j - h_j$ for affected positions. Then $\|\eta(d') - \eta(d)\| = \|\frac{1}{n} \sum_{j} \Delta h_j\| \leq \frac{t}{n} \cdot \max_j \|\Delta h_j\|$. The bound $B = \max_j \|\Delta h_j\|$ depends on token embedding differences, not on $|v' - v|$: the model processes numerical tokens as discrete vocabulary items, so changing ``15'' to ``65'' produces a fixed $\Delta h$ regardless of the arithmetic difference. Since $|v' - v|$ is unconstrained while $\frac{t}{n} \cdot B$ is fixed, $G \rightarrow \infty$.
\end{proof}

\begin{corollary}[Universal Detection Failure]
\label{cor:detection_failure}
For any embedding-based defense with threshold $\delta > 0$, any numerical manipulation satisfies $\|\eta(d') - \eta(d)\| < \delta$ whenever $\frac{t}{n} \cdot B < \delta$, which holds for all documents where $n > t \cdot B / \delta$. The adversary's numerical change $|v' - v|$ is unconstrained by $\delta$.
\end{corollary}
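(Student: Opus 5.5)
The corollary follows directly from Theorem~\ref{thm:blindspot}, so I would argue by substituting the bound of Eq.~\eqref{eq:bound} into the defense's acceptance condition. I model an embedding-based defense with threshold $\delta>0$ as one that flags a modified document $d'$ only when $\|\eta(d')-\eta(d)\|\ge\delta$. This is the form fixed in the definition of the Numerical Claim Adversary, and centroid-distance and clustering defenses reduce to it after rescaling $\delta$. The argument then chains two inequalities. First, Theorem~\ref{thm:blindspot} gives $\|\eta(d')-\eta(d)\|\le \frac{t}{n}B$. Second, under the hypothesis $\frac{t}{n}B<\delta$, transitivity gives $\|\eta(d')-\eta(d)\|<\delta$, so $d'$ passes the defense. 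The hypothesis $\frac{t}{n}B<\delta$ is equivalent to $n>tB/\delta$, using $n,\delta>0$, which is the stated condition on document length.

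For the final clause, I would note that the right-hand side $\frac{t}{n}B$ is independent of $v'$. This is the point the theorem's proof makes: $B$ is a maximum over per-token representation changes, which depends only on the token identities and not on $|v'-v|$. So once $n>tB/\delta$, every choice of $v'$ yields an undetected $d'$. Consequently $\sup_{v'}|v'-v|$ over undetected manipulations is infinite, and $\delta$ places no constraint on $|v'-v|$.

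The main obstacle is making ``$B$ is fixed'' precise enough that it does not depend on $v'$ or on $n$. Under full self-attention, changing a numerical token perturbs every contextual $h_i$, not just the $t$ positions the theorem counts. Changing the value can also change the tokenization, and hence $t$. To handle this, I would define $B$ as a uniform bound: the supremum of $\|h'_i-h_i\|$ over all positions and over all single-value substitutions of bounded token length $t\le t_{\max}$. I would then assume, as the theorem implicitly does, that only the $t$ substituted positions change appreciably, with the others contributing at most an $o(1/n)$ total term that can be absorbed into $B$. With $B$ and $t_{\max}$ fixed in this uniform way, the threshold $n>t_{\max}B/\delta$ holds for every $v'$ at once. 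That uniformity is exactly what the corollary needs, so I would state this uniform-boundedness assumption explicitly rather than try to derive it.
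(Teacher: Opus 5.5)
Your argument matches the paper's: the corollary follows directly from Theorem~\ref{thm:blindspot} by chaining $\|\eta(d')-\eta(d)\| \le \frac{t}{n}B < \delta$, rearranging to $n > tB/\delta$, and noting that $|v'-v|$ never enters the bound, and your uniform-$B$ assumption just makes explicit what the theorem's proof takes for granted. One caution on that fix: once you cap the substitution length at $t_{\max}$, a fixed document admits only finitely many $v'$, so $|v'-v|$ is unbounded only across documents (letting $t_{\max}$ and $n$ grow together with $n > t_{\max}B/\delta$), not in the sense that $\sup_{v'}|v'-v|$ is infinite for a fixed $d$.
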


\textbf{Empirical validation.} I validate Theorem~\ref{thm:blindspot} on 174 manipulation pairs across 12 government-format passages and two embedding models. Table~\ref{tab:blindspot} shows the aggregate results. Table~\ref{tab:blindspot_range} breaks down the sensitivity gap by the magnitude of the numerical change. The blind spot holds uniformly across small (\$1) and large (\$50{,}000) manipulations.

\begin{table}[t]
\centering
\caption{Embedding Blind Spot: Aggregate Results}
\label{tab:blindspot}
\small
\begin{tabular}{@{}lrr@{}}
\toprule
\textbf{Metric} & \textbf{MiniLM} & \textbf{BGE} \\
\midrule
Manipulation pairs tested & 174 & 174 \\
Mean cosine similarity & 0.9989 & 0.9979 \\
Max embedding perturbation & 0.0229 & 0.0243 \\
Mean sensitivity gap & 1{,}459$\times$ & 911$\times$ \\
Max numerical change & \$50{,}000 & \$50{,}000 \\
Detected at $\delta = 0.08$ & 0/174 & 0/174 \\
Detected at $\delta = 0.05$ & 0/174 & 0/174 \\
Detected at $\delta = 0.02$ & 1/174 & 3/174 \\
\bottomrule
\end{tabular}
\end{table}

% --- ADDITION 1: Sensitivity gap by delta range ---
\begin{table}[t]
\centering
\caption{Sensitivity Gap by Numerical Change Magnitude (MiniLM)}
\label{tab:blindspot_range}
\small
\begin{tabular}{@{}lrrrr@{}}
\toprule
\textbf{$\Delta v$ Range} & \textbf{Count} & \textbf{Mean CosSim} & \textbf{Mean Perturb} & \textbf{Max Perturb} \\
\midrule
$< 1\%$ & 54 & 0.9989 & 0.0011 & 0.0042 \\
$1$--$5\%$ & 31 & 0.9983 & 0.0017 & 0.0229 \\
$5$--$10\%$ & 15 & 0.9990 & 0.0010 & 0.0025 \\
$10$--$50\%$ & 34 & 0.9990 & 0.0010 & 0.0073 \\
$50$--$100\%$ & 14 & 0.9993 & 0.0007 & 0.0020 \\
$> 100\%$ & 26 & 0.9989 & 0.0011 & 0.0083 \\
\bottomrule
\end{tabular}
\end{table}

The most extreme example: changing the head-of-household standard deduction from \$22{,}500 to \$72{,}500 (\$50{,}000 change, 222\%) produces cosine similarity 0.99983 on MiniLM. Notably, the perturbation does \emph{not} increase with the magnitude of the numerical change: a \$50{,}000 change produces the same perturbation as a \$1 change. This confirms that embedding models encode the \emph{presence} of a number, not its \emph{value}.

\begin{proposition}[Provenance Insufficiency]
\label{prop:provenance}
For any provenance-only defense $D_{\text{prov}}$ that verifies document signatures but does not inspect claim content, there exists an insider adversary $\mathcal{A}$ with valid signing credentials such that $\text{ASR}(\mathcal{A}, D_{\text{prov}}) = \text{ASR}(\mathcal{A}, \text{no\_defense})$.
\end{proposition}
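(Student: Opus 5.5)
The plan is a direct construction: exhibit an insider adversary whose modified documents pass every check that $D_{\text{prov}}$ can perform. The key observation is that, by hypothesis, $D_{\text{prov}}$ is a function only of the pair (document bytes, signature) and the verification keys. It never looks at the claim values $v$ in $C(d)$. So the proof reduces to two facts. First, the adversary can always produce a valid signature on any document it chooses. Second, once every poisoned document is accepted, the retrieval and generation pipeline behaves exactly as it does with no defense.

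\textbf{Construction.} Take $\mathcal{A}$ as in the Numerical Claim Adversary definition, holding a legitimate signing key $sk$ (insider, or compromised source). For each target document $d$ with claim $c=(e,a,v,u)$, $\mathcal{A}$ forms $d'$ by replacing $v$ with $v'\neq v$. It then signs, $\sigma'=\mathrm{Sign}_{sk}(d')$, and publishes $(d',\sigma')$ through the normal ingestion channel. Correctness of the signature scheme gives $\mathrm{Verify}_{pk}(d',\sigma')=1$. Any additional provenance metadata, such as source identity, timestamps, or chain of custody, is likewise genuine, because it originates from an authorized principal. Hence $D_{\text{prov}}$ accepts every $d'$ that $\mathcal{A}$ produces.

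\textbf{Equality of ASR.} Let $\mathcal{K}'$ be the poisoned knowledge base. We must check that the filtered base under $D_{\text{prov}}$ contains all of $\mathcal{K}'$'s adversarial content. That is, $D_{\text{prov}}$ removes nothing that $\mathcal{A}$ inserted. It could, however, remove unrelated unsigned documents. The cleanest way to handle this is to quantify over the setting: either all benign documents are also validly signed, which is the natural deployment, or we define ASR on the adversary's target queries and argue that removing unsigned benign passages cannot decrease success. I would take the first route, which makes the defended base and the undefended base coincide, $D_{\text{prov}}(\mathcal{K}')=\mathcal{K}'$. Then $\rho$ and $\gamma$ receive identical inputs on every query $q$, so the answer distributions are identical. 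Since ASR is a functional of those answers, $\text{ASR}(\mathcal{A},D_{\text{prov}})=\text{ASR}(\mathcal{A},\text{no\_defense})$.

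\textbf{Main obstacle.} The delicate step is stating precisely what "does not inspect claim content" means. A defense might, for example, reject documents whose hash differs from a previously signed version. I would formalize $D_{\text{prov}}$ as any acceptance predicate depending only on provenance metadata that the authorized signer controls. Under this formalization the insider can regenerate all such metadata, including re-registering the new hash as an authorized revision, so the predicate's value on $(d',\sigma')$ matches its value on a legitimate update. It is also worth noting, via Theorem~\ref{thm:blindspot}, that the same $\mathcal{A}$ simultaneously satisfies the embedding constraint $\|\eta(d')-\eta(d)\|<\delta$. The adversary is therefore admissible under the threat model, and the result is not vacuous.
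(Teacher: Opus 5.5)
Your proposal is correct and is the paper's own argument: the insider signs the modified $d'$ with its legitimate key, so $D_{\text{prov}}$ verifies the signature and admits it. Your additional step spells out what the paper leaves implicit: with all benign documents also signed, the defended and undefended knowledge bases coincide, so $\rho$ and $\gamma$ see identical inputs and the ASRs match. Your closing remark about the embedding constraint is not needed for the statement, and it follows only under Corollary~\ref{cor:detection_failure}'s condition $n > tB/\delta$, not from Theorem~\ref{thm:blindspot} alone.
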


\begin{proof}
The insider possesses valid signing key $k_{\text{priv}}$. For any document $d$, the insider constructs $d'$ by modifying a numerical value and signs $d'$ with $k_{\text{priv}}$. $D_{\text{prov}}$ verifies the valid signature and admits $d'$.
\end{proof}

Together, Theorem~\ref{thm:blindspot} and Proposition~\ref{prop:provenance} establish that neither embedding-based defenses nor provenance verification can detect numerical claim manipulation by insiders. This motivates claim-level verification.

% ============================================================================
\section{RAGShield Architecture}
\label{sec:architecture}
% ============================================================================

RAGShield operates at the \emph{claim level} rather than the document or passage level. Figure~\ref{fig:architecture} illustrates the pipeline.

\begin{figure}[t]
\centering
\begin{tikzpicture}[
    layer/.style={draw, rounded corners, minimum width=6.5cm, minimum height=0.8cm, align=center, font=\small},
    arrow/.style={->, thick},
    node distance=0.35cm
]
\node[layer, fill=blue!10] (l0) {Provenance-Verified Ingestion\\{\scriptsize (blocks unsigned/forged documents)}};
\node[layer, fill=green!15, below=of l0] (l1) {Numerical Claim Extraction\\{\scriptsize NER + regex $\rightarrow$ (entity, attr, value, unit)}};
\node[layer, fill=green!20, below=of l1] (l2) {Cross-Source Claim Verification\\{\scriptsize Compare against multi-source registry}};
\node[layer, fill=yellow!15, below=of l2] (l3) {Temporal Claim Tracking\\{\scriptsize Authorized-change calendar check}};
\node[layer, fill=orange!15, below=of l3] (l4) {Harm-Aware Response\\{\scriptsize Confidence indicators + correct context}};

\draw[arrow] (l0) -- (l1);
\draw[arrow] (l1) -- (l2);
\draw[arrow] (l2) -- (l3);
\draw[arrow] (l3) -- (l4);

\node[above=0.3cm of l0, font=\small\bfseries] {Incoming Document / Retrieved Passage};
\node[below=0.3cm of l4, font=\small\bfseries] {Verified Response to Citizen};
\end{tikzpicture}
\caption{RAGShield architecture. The provenance layer (blue) blocks external injection. The claim-level layers (green/yellow) detect numerical manipulation that embedding-based defenses miss.}
\label{fig:architecture}
\end{figure}
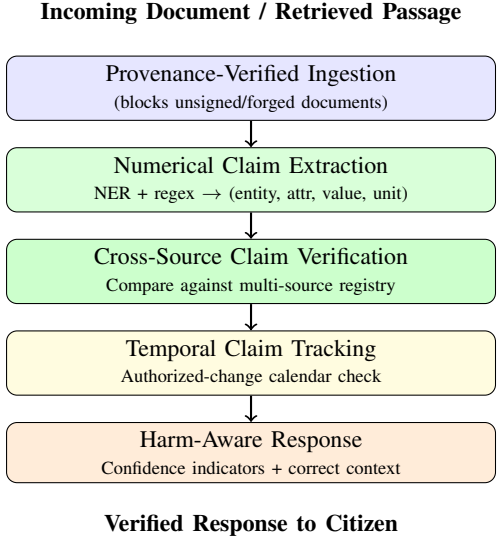

\subsection{Numerical Claim Extraction}
\label{sec:extraction}

Each document entering the knowledge base is processed by a claim extraction engine that identifies structured numerical claims. A claim is a tuple $c = (e, a, v, u)$ where $e$ is the entity, $a$ is the attribute qualifier, $v$ is the numerical value, and $u$ is the unit.

% --- ADDITION 2: Algorithm pseudocode ---
Algorithm~\ref{alg:extraction} describes the extraction procedure.

\begin{algorithm}[t]
\caption{Numerical Claim Extraction with Two-Pass Entity Resolution}
\label{alg:extraction}
\begin{algorithmic}[1]
\Require Document text $d$, source identifier $s$
\Ensure Set of claims $C$
\State $C \gets \emptyset$
\State $y \gets \textsc{DetectTaxYear}(d)$ \Comment{Passage-level year context}
\State $\Sigma \gets \textsc{SplitSentences}(d)$
\State \Comment{\textbf{Pass 1:} Independent entity detection per sentence}
\For{$i = 1$ to $|\Sigma|$}
    \State $E[i] \gets \textsc{DetectEntity}(\Sigma[i])$ \Comment{120 domain patterns}
\EndFor
\State \Comment{\textbf{Pass 2:} Nearest-neighbor entity resolution}
\State $R \gets E$ \Comment{Resolved entities}
\State $e_{\text{fwd}} \gets \textsc{DetectEntity}(d)$ \Comment{Passage-level fallback}
\For{$i = 1$ to $|\Sigma|$} \Comment{Forward propagation}
    \If{$R[i] \neq \texttt{null}$} $e_{\text{fwd}} \gets R[i]$
    \Else{} $R[i] \gets e_{\text{fwd}}$
    \EndIf
\EndFor
\For{$i = 1$ to $|\Sigma|$} \Comment{Backward fill for leading nulls}
    \If{$R[i] = \texttt{null}$} $R[i] \gets \textsc{FirstKnown}(E)$
    \Else{} \textbf{break}
    \EndIf
\EndFor
\For{$i = 1$ to $|\Sigma|$}
    \State $a \gets \textsc{DetectAttribute}(\Sigma[i])$
    \For{each numerical match $m$ in $\Sigma[i]$}
        \State $v, u \gets \textsc{ParseValue}(m)$
        \State $C \gets C \cup \{(R[i], a, v, u, s, y)\}$
    \EndFor
\EndFor
\State \Return $C$
\end{algorithmic}
\end{algorithm}

\textbf{Stage 1: Value extraction.} Regex patterns identify numerical values in government document formats: dollar amounts (\$X{,}XXX), percentages (X\%), monthly rates (\$X/month), and dates. The patterns handle comma-separated thousands, decimal amounts, and common government formatting.

\textbf{Stage 2: Entity and attribute linking.} For each extracted value, the extractor identifies the governing entity using a two-pass resolution strategy. In the first pass, 120 domain-specific regex patterns are applied to each sentence independently, covering tax deductions, retirement limits, Social Security benefits, Medicare premiums, filing thresholds, IRA/Roth limits, railroad retirement, worksheet instructions, and 30+ additional government-specific categories. In the second pass, sentences without a direct entity match inherit context from the nearest neighbor: forward propagation carries entity context from earlier sentences, and backward fill resolves leading sentences that precede the first entity mention. Sixteen attribute qualifiers (e.g., ``single filer,'' ``married filing jointly'') further disambiguate claims. Pattern ordering is critical: more specific patterns (``401(k) contribution limit'') are matched before general patterns (``catch-up contribution'') to avoid misattribution.

\textbf{Extraction performance.} On 12 government-format passages containing 25 ground-truth claims: precision = 83.3\%, recall = 100\%, F1 = 90.9\%. On 2{,}742 real IRS passages from 5 official publications: entity detection rate = 99.8\% (2{,}200/2{,}205 claims linked to a specific entity). The 5 remaining unknowns (0.2\%) are bare worksheet fragments with zero textual context (e.g., ``\$2{,}500 Next.''), the irreducible floor for pattern-based extraction. The 83.3\% precision reflects 5 extra extractions (e.g., extracting the ``total'' amount \$31{,}000 in addition to the base \$23{,}500 and catch-up \$7{,}500 for 401(k) limits).

\subsection{Cross-Source Claim Verification}
\label{sec:verification}

Extracted claims are stored in a \emph{claim registry}, a SQLite database indexed by claim key $(e, a, u)$. When a new document is ingested or a passage is retrieved, its claims are verified against the registry.

% --- ADDITION 3: Verification algorithm ---
Algorithm~\ref{alg:verify} describes the verification procedure with three fallback strategies.

\begin{algorithm}[t]
\caption{Cross-Source Claim Verification}
\label{alg:verify}
\begin{algorithmic}[1]
\Require Claim $c = (e, a, v, u)$, Registry $R$
\Ensure Verification status $\in \{\text{V, U, D, S}\}$
\State $M \gets R.\textsc{ExactKeyMatch}(e, a, u)$
\If{$M = \emptyset$}
    \State $M \gets R.\textsc{EntityUnitMatch}(e, u)$ \Comment{Broader}
\EndIf
\If{$M = \emptyset$}
    \State $M \gets R.\textsc{ValueProximity}(v, u, e, \pm 15\%)$ \Comment{Broadest}
\EndIf
\If{$M = \emptyset$} \Return UNVERIFIED \EndIf
\State $M' \gets \{m \in M : m.\text{source} \neq c.\text{source}\}$
\State $v^* \gets \textsc{TrustWeightedConsensus}(M')$
\If{$|M'| < 2$ \textbf{and} $|v - v^*| > \varepsilon$} \Return SUSPICIOUS
\ElsIf{$|M'| < 2$} \Return UNVERIFIED
\EndIf
\State $\kappa \gets |\{m \in M' : |m.v - c.v| \leq \varepsilon\}| / |M'|$
\If{$\kappa \geq 0.8$} \Return VERIFIED
\ElsIf{$\kappa = 0$} \Return SUSPICIOUS
\Else{} \Return DISPUTED
\EndIf
\end{algorithmic}
\end{algorithm}

\textbf{Verification statuses:} VERIFIED (consistency $\geq 0.8$, $\geq 2$ sources agree), UNVERIFIED (fewer than 2 independent sources and no discrepancy), DISPUTED (consistency $< 0.5$, sources disagree), SUSPICIOUS (consistency $= 0$ or single-source discrepancy, contradicting available evidence). The single-source discrepancy rule is critical: even one trusted source disagreeing with a claim is evidence of manipulation, preventing evasion through claims that appear in only one registry source.

\begin{theorem}[Claim Detection Bound]
\label{thm:detection}
Let $E$ be a claim extraction system with precision $p_{\text{prec}}$ and recall $p_{\text{rec}}$. Let $R$ be a claim registry with $k$ independent sources for consensus value $v^*$. For a manipulated claim where $|v' - v^*| > \varepsilon$:
\begin{equation}
P_{\text{detect}} \geq p_{\text{rec}} \cdot (1 - (1 - p_{\text{prec}})^k)
\end{equation}
\end{theorem}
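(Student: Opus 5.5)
The bound factors into two events that must both occur for detection. The first is that the manipulated claim $c' = (e, a, v', u)$ is extracted at all from the poisoned passage. The second is that, given extraction, the registry lookup in Algorithm~\ref{alg:verify} returns a set $M'$ of other-source matches containing at least one correct entry whose value lies within $\varepsilon$ of $v^*$. I will write $P_{\text{detect}} \geq P(\text{extract}) \cdot P(\text{flag} \mid \text{extract})$. The first factor is at least $p_{\text{rec}}$ by the definition of recall. This step needs the modeling assumption that recall on the manipulated passage equals recall on clean text. That is plausible because, by Theorem~\ref{thm:blindspot} and the construction of $d'$, only the value token changes, and the regex engine of Stage~1 does not depend on the value's magnitude.

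For the second factor, I interpret precision at the registry level. Each of the $k$ independent source entries for the key $(e,a,u)$ is a correctly extracted claim, carrying the true value $v^*$, with probability at least $p_{\text{prec}}$. Otherwise it is a spurious extraction, such as the ``total'' \$31{,}000 example in \S\ref{sec:extraction}, which need not agree with $v^*$. Independence of the sources gives that the probability all $k$ entries are spurious is at most $(1-p_{\text{prec}})^k$. So with probability at least $1-(1-p_{\text{prec}})^k$ there is some entry $m \in M'$ with $m.v = v^*$, and hence $|m.v - v'| > \varepsilon$.

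I then need to show that the existence of one such correct entry forces a SUSPICIOUS or DISPUTED status, never VERIFIED or UNVERIFIED. I will split on $|M'|$. If $|M'| < 2$, the single-source discrepancy rule returns SUSPICIOUS as soon as the consensus value $v^*$ differs from $v'$ by more than $\varepsilon$. If $|M'| \geq 2$, the fraction $\kappa$ of entries within $\varepsilon$ of $v'$ is strictly less than $1$, and the status is VERIFIED only when $\kappa \geq 0.8$.

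This case analysis is the main obstacle. A minority of correct entries could be outvoted by spurious entries that happen to agree with $v'$. The trust-weighted consensus could also be pulled toward a spurious value, which would break the $|M'|<2$ branch. I will first try to close the gap with an explicit assumption that spurious extractions do not coincide with the adversary's chosen $v'$ to within $\varepsilon$. This is reasonable because the adversary picks $v'$ after the registry is built and spurious values are tied to other entities or attributes. Under that assumption every $m$ with $|m.v - v'|\le\varepsilon$ is absent, so $\kappa = 0$ and the status is SUSPICIOUS in all branches. If the assumption is too strong, I would fall back to stating the theorem for ``at least one correct source yields a flag,'' which is exactly the event the single-source rule is designed around.

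Multiplying the two factors under the independence of extraction on $d'$ from the registry's prior contents gives $P_{\text{detect}} \geq p_{\text{rec}}\,(1-(1-p_{\text{prec}})^k)$.
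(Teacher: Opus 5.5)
Your argument is the paper's proof. It uses the same factorization into the extraction event (probability at least $p_{\text{rec}}$) and the event that at least one of the $k$ independent registry entries is correctly extracted (probability at least $1-(1-p_{\text{prec}})^k$), multiplied using independence. The only difference is that you check whether these events actually \emph{suffice} to force SUSPICIOUS or DISPUTED under Algorithm~\ref{alg:verify}, which the paper merely asserts (its ``detection requires'' wording gives necessity, the wrong direction for a lower bound), and your explicit non-coincidence assumption on spurious entries is exactly what makes that step rigorous; your worry about the consensus being pulled in the $|M'|<2$ branch is vacuous, since under your event $M'$ is then just the single correct entry.
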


\begin{proof}
Detection requires: (1) the claim is extracted (probability $p_{\text{rec}}$), and (2) at least one of $k$ registry entries is correctly extracted (probability $1 - (1 - p_{\text{prec}})^k$). Independence gives the bound.
\end{proof}

With measured $p_{\text{prec}} = 0.833$, $p_{\text{rec}} = 1.0$, $k = 3$: $P_{\text{detect}} \geq 0.995$.

\textbf{Worked example.} Consider an insider attack that changes the standard deduction from \$15{,}000 to \$15{,}500 in a new document. The claim extraction engine processes the document and extracts claim $c' = (\text{``standard deduction''}, \text{``single filer''}, 15500, \text{USD})$. The verification algorithm proceeds:

\begin{enumerate}[leftmargin=*]
\item \emph{Exact key match:} Look up (``standard deduction'', ``single filer'', USD) in the registry. Three sources report \$15{,}000. The incoming claim reports \$15{,}500.
\item \emph{Consensus:} $v^* = 15{,}000$ (all three sources agree).
\item \emph{Consistency:} $\kappa = 0/3 = 0$ (no source agrees with \$15{,}500).
\item \emph{Decision:} $\kappa = 0$ with $\geq 2$ sources $\Rightarrow$ SUSPICIOUS.
\end{enumerate}

The claim is flagged and the poisoned passage is blocked. The LLM receives the correct context (\$15{,}000) instead. An embedding-based defense would see cosine similarity 0.9997 between the original and poisoned passages and take no action.

\begin{definition}[Source Independence]
\label{def:independence}
Two sources $s_1, s_2$ are independent with respect to claim key $(e, a, u)$ if they derive the claim value through different publication channels. In government context, independent channels include: (a)~the originating agency publication (e.g., IRS Revenue Procedure), (b)~the Federal Register notice, (c)~agency guidance documents, (d)~inter-agency memoranda. These channels report the same authoritative value through different paths. Independence refers to the channel, not the original determination.
\end{definition}

From the evaluation on real IRS documents: $H_{\text{base}} = \$243{,}309$ (no defense), RAGShield detects all 430 attacks ($H_{\text{v2}} = \$0$), while Embed-only detects only 79/430 ($H_{\text{emb}} = \$213{,}739$).

\begin{theorem}[Expected Harm Bound]
\label{thm:harm}
Let $\mathcal{A}$ be a set of $n$ numerical manipulation attacks, each with harm $h_i = |v'_i - v^*_i| \cdot s_i$. Under RAGShield with claim detection probability $P_{\text{detect}}$ (Theorem~\ref{thm:detection}), the expected total harm is:
\begin{equation}
\mathbb{E}[H_{\text{v2}}] \leq H_{\text{base}} \cdot (1 - P_{\text{detect}})
\end{equation}
where $H_{\text{base}} = \sum_{i=1}^{n} h_i$ is the total harm without defense.
\end{theorem}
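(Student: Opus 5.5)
The plan is to prove Theorem~\ref{thm:harm} by writing the residual harm as a sum of per-attack contributions and applying linearity of expectation. For each attack $i \in \{1,\dots,n\}$, I introduce the indicator $X_i = \mathbf{1}[\text{attack } i \text{ is not detected by RAGShield}]$. The harm that survives the defense is then $H_{\text{v2}} = \sum_{i=1}^n h_i X_i$. This rests on the architectural fact from \S\ref{sec:verification}: a claim flagged SUSPICIOUS or DISPUTED causes its passage to be blocked, and the LLM receives the correct registry value instead. A detected attack therefore contributes zero harm, and an undetected one contributes at most its full harm $h_i = |v'_i - v^*_i|\cdot s_i$, the harm it would cause with no defense.

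Next I take expectations: $\mathbb{E}[H_{\text{v2}}] = \sum_i h_i\,\mathbb{E}[X_i] = \sum_i h_i\,\Pr[\text{attack } i \text{ undetected}]$. Linearity of expectation gives this without any independence assumption across attacks. By definition each attack manipulates a claim with $|v'_i - v^*_i| > \varepsilon$, since otherwise $h_i$ is negligible and can be absorbed. That is exactly the hypothesis of Theorem~\ref{thm:detection}, so $\Pr[\text{attack } i \text{ detected}] \geq P_{\text{detect}}$ and hence $\mathbb{E}[X_i] \leq 1 - P_{\text{detect}}$. Substituting gives $\mathbb{E}[H_{\text{v2}}] \leq (1-P_{\text{detect}})\sum_i h_i = H_{\text{base}}(1-P_{\text{detect}})$, using $h_i \geq 0$ to preserve the inequality term by term.

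The main obstacle is making the uniformity in Theorem~\ref{thm:detection} legitimate. That bound is stated with a single $k$, $p_{\text{prec}}$ and $p_{\text{rec}}$, but different attacked claims may have different numbers of independent registry sources $k_i$. I would handle this by taking $P_{\text{detect}}$ to be the worst case over the attacked claims, i.e., computing it with $k = \min_i k_i$. Because $1-(1-p_{\text{prec}})^k$ is increasing in $k$, each per-attack detection probability is at least this common value, and the argument goes through. I would also state explicitly that undetected manipulations cause harm no greater than $h_i$, which holds because the defense only removes passages and never introduces new numerical values.
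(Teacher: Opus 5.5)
Your proposal is correct and follows the paper's argument: a detected attack contributes zero harm, an undetected one contributes $h_i$, and linearity of expectation with a per-attack detection probability of at least $P_{\text{detect}}$ gives the bound. Your write-up is cleaner than the paper's, which writes an equality and invokes an independence assumption that linearity of expectation does not need, and evaluating $P_{\text{detect}}$ at $k=\min_i k_i$ is a sensible way to justify using one uniform bound. Drop the aside that attacks with $|v'_i - v^*_i| \le \varepsilon$ ``can be absorbed'': a multiplicative bound cannot absorb them. Instead, take $|v'_i - v^*_i| > \varepsilon$ as part of the hypothesis, as the paper implicitly does.
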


\begin{proof}
Each attack $i$ is detected independently with probability $\geq P_{\text{detect}}$. A detected attack contributes zero harm (the poisoned passage is blocked and replaced with the consensus value). An undetected attack contributes $h_i$. By linearity of expectation:
$\mathbb{E}[H_{\text{v2}}] = \sum_{i=1}^{n} h_i \cdot (1 - P_{\text{detect}}) = H_{\text{base}} \cdot (1 - P_{\text{detect}})$.
\end{proof}

With $P_{\text{detect}} \geq 0.995$ and $H_{\text{base}} = \$243{,}309$: $\mathbb{E}[H_{\text{v2}}] \leq \$1{,}217$. The empirical result ($H_{\text{v2}} = \$0$) is consistent with this bound. The gap between the bound and the empirical result reflects that the bound is conservative: it uses worst-case extraction probabilities rather than the observed 100\% detection on the real corpus.

\begin{proposition}[Evasion-Effectiveness Trade-off]
\label{prop:tradeoff}
An adaptive adversary who evades claim extraction by using non-standard numerical formatting simultaneously reduces the probability that the LLM correctly uses the poisoned value. Let $p_{\text{ext}}(f)$ be the extraction probability for format $f$ and $p_{\text{llm}}(f)$ be the LLM's probability of reproducing the value. The effective attack success is $\text{ASR}(f) = (1 - p_{\text{ext}}(f)) \cdot p_{\text{llm}}(f)$. Formats that minimize $p_{\text{ext}}$ (e.g., ``103.3\% of prior year'') also minimize $p_{\text{llm}}$ because they require numerical reasoning rather than simple extraction.
\end{proposition}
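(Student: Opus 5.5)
The plan has three parts: an exact factorization of the attack success, an argument that extraction failure and LLM use are driven by the same quantity, and a monotonicity consequence. The proposition is partly definitional and partly a qualitative modeling claim.

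\textbf{Factorization.} An attack succeeds iff (i) the poisoned claim escapes the verification pipeline of \S\ref{sec:verification} and (ii) the generator $\gamma$ reproduces the poisoned value in its answer. If the claim is extracted, I argue via Algorithm~\ref{alg:verify} and the worked example that the claim is flagged SUSPICIOUS and blocked whenever $|v'-v^*|>\varepsilon$. This holds with probability at least $1-(1-p_{\text{prec}})^k$ by Theorem~\ref{thm:detection}; for the idealized statement I take this to be $1$. So (i) has probability $1-p_{\text{ext}}(f)$. I then model the extraction engine's failure and the LLM's behavior as conditionally independent given the format $f$. The extractor is deterministic regex over surface form, while $\gamma$ is a separate stochastic process. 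This yields $\text{ASR}(f)=(1-p_{\text{ext}}(f))\,p_{\text{llm}}(f)$ by multiplying the two probabilities.

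\textbf{Shared dependence on format.} I introduce a scalar "reasoning depth" $r(f)$ for each format. It counts the arithmetic or referential operations needed to recover $v'$ from the literal text. For example, "\$15{,}500" has $r=0$, "fifteen thousand five hundred dollars" has $r=0$ but is non-literal, and "103.3\% of prior year" has $r\ge 1$, since it requires retrieving a base and multiplying. I posit two monotonicity assumptions. First, $p_{\text{ext}}(f)$ is nonincreasing in the distance of $f$ from the regex grammar of Stage~1. Second, $p_{\text{llm}}(f)$ is nonincreasing in $r(f)$, citing the LLM numerical-reasoning literature discussed in \S\ref{sec:related}.

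\textbf{Conclusion and main obstacle.} The main step is to show that the formats driving $p_{\text{ext}}$ to zero are exactly those with $r(f)\ge1$. Then every format that minimizes $p_{\text{ext}}$ has its $p_{\text{llm}}$ bounded by the value at $r\ge1$, and the product is dominated by the literal-format case. I would do this by a case split on format classes: literal currency or percent, which is extracted; spelled-out or oddly punctuated literals; and relative or derived expressions.

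The obstacle is the middle class. Spelled-out or unusually delimited numbers may evade regex while still having $r=0$, so LLMs copy them easily, and the trade-off can fail there. I would first try to handle this case by extending the extraction grammar (normalizing words to numbers) so that $p_{\text{ext}}\approx1$ on that class. The claim then reduces to the relative-expression class, where both monotonicity assumptions apply. I would state the proposition's conclusion explicitly as conditional on that coverage.
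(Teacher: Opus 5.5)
\textbf{Verdict: genuine gap.} Your proposal does not establish the stated proposition. The obstacle you flag is exactly where the statement breaks, so what you end up with is a conditional proposition about a different extractor.

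For context, the paper gives no proof. $\text{ASR}(f)=(1-p_{\text{ext}}(f))\,p_{\text{llm}}(f)$ is simply stipulated. The trade-off rests on the clause ``because they require numerical reasoning.'' The adaptive-evasion discussion applies that clause to \emph{all three} extraction-evading strategies in Table~\ref{tab:adaptive}, including spelled-out numbers (E2) and ``\$15.5K'' (E7). Your proposal makes the hidden assumptions explicit, which the paper does not.

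Your main step is that the formats driving $p_{\text{ext}}$ to zero are exactly those with $r(f)\ge 1$. This is false for the extractor the proposition is about. E2 and E7 evade the Stage~1 regex in every test, so $p_{\text{ext}}=0$. Yet they need no arithmetic, or only a trivial rescaling, to read. By your own second monotonicity assumption, their $p_{\text{llm}}$ therefore sits at the literal-format level, and $\text{ASR}(f)$ is essentially the undefended success rate.

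Extending the grammar with word-to-number and K/M normalization (the paper's own listed mitigations) does not close this case. It changes the system and hence $p_{\text{ext}}$. What you would then prove is a different, conditional proposition about a hypothetical extractor that covers every literal encoding.

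For the regex extractor actually described, the middle class can only be handled by an empirical premise: that LLMs reproduce non-digit literals less reliably. The paper asserts this premise without measuring it, and your reasoning-depth model does not support it. The correct output of your analysis is the restricted statement, labeled as such, or an explicit note that the proposition fails for E2 and E7.

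\textbf{Smaller issues.}

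\emph{Relative-expression class.} Your two monotonicity assumptions are in different variables: distance from the grammar for $p_{\text{ext}}$, and $r$ for $p_{\text{llm}}$. Even granting the main step, you only get that $\text{ASR}(f)=p_{\text{llm}}(f)$ is at most the LLM's success rate at depth one. That is a genuine reduction only if $p_{\text{llm}}$ drops \emph{strictly} at $r=1$. It does not show that the $p_{\text{ext}}$-minimizers ``also minimize'' $p_{\text{llm}}$.

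\emph{The comparison baseline.} Your phrase ``the product is dominated by the literal-format case'' needs care. Under RAGShield the literal format has $\text{ASR}\approx 0$. The comparison that carries content is with the \emph{undefended} literal attack, whose success rate is $p_{\text{llm}}$ at $r=0$.

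\emph{Extraction does not imply blocking.} Your derivation of the product formula assumes extraction implies blocking. Algorithm~\ref{alg:verify} does not give this: a claim with no registry match returns UNVERIFIED, and only DISPUTED or SUSPICIOUS passages are replaced. Theorem~\ref{thm:detection} does not give it either, since its bound is below $1$.

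\emph{Independence.} The independence you need is between extractor misses and LLM reproduction across texts \emph{within} $f$. Determinism of the regex gives independence only given the text. This is harmless when $p_{\text{ext}}(f)\in\{0,1\}$, as in Table~\ref{tab:adaptive}.

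Since the paper stipulates the formula, adopt it as a definition. Pair it with the paper's empirical finding that no extracted value evaded verification.
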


\subsection{Temporal Claim Tracking}
\label{sec:temporal}

Government numerical values change on predictable schedules. The temporal tracker maintains an \emph{authorized-change calendar}:

\begin{itemize}[leftmargin=*]
\item \textbf{IRS tax values:} Announced October/November, effective January~1.
\item \textbf{SSA benefit amounts:} COLA announced October, effective January~1.
\item \textbf{Medicare premiums:} Annual determination, effective January~1.
\item \textbf{HHS poverty guidelines:} Published January/February.
\end{itemize}

When a claim value changes, the temporal tracker checks whether the change falls within the authorized window. Changes outside the window are flagged. The tracker also performs \emph{year consistency checking}, which flags documents that reference outdated tax years.

\begin{proposition}[Temporal Detection Guarantee]
\label{prop:temporal}
Let $\mathcal{C}$ be the set of entity types with defined authorized-change windows, and let $c$ be a claim with entity $e \in \mathcal{C}$ whose value changes at time $t$. If $t$ falls outside the authorized window $W(e)$, the temporal tracker flags the change with probability 1. Formally:
\begin{equation}
P_{\text{temporal}}(c, t) = \begin{cases} 1 & \text{if } t \notin W(e) \text{ and } e \in \mathcal{C} \\ 0 & \text{if } t \in W(e) \\ \text{undefined} & \text{if } e \notin \mathcal{C} \end{cases}
\end{equation}
\end{proposition}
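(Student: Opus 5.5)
The plan is to treat the temporal tracker as a deterministic decision procedure and read each case of the displayed formula directly off its definition. I would begin by fixing the formal model implicit in \S\ref{sec:temporal}. The authorized-change calendar is a partial map $e \mapsto W(e)$, defined exactly on $\mathcal{C}$. The tracker's rule, applied when the value stored for claim key $(e,a,u)$ changes at ingestion time $t$, is: flag if $e \in \mathcal{C}$ and $t \notin W(e)$; do not flag if $t \in W(e)$; and, if $e \notin \mathcal{C}$, defer the decision to other layers, so the temporal layer has no verdict. The central point is that this rule involves no sampling and no learned component. It is a membership test of $t$ against a fixed set of calendar intervals, such as the October--January window for IRS values. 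The probability in the statement is therefore either $0$ or $1$, taken over any randomness in the attack, because the tracker's output is a function of $(e,t)$ alone.

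With that model in place, the three cases follow directly. For $e \in \mathcal{C}$ and $t \notin W(e)$, the membership test fails and the flag is raised for every such event, which gives probability $1$. For $t \in W(e)$, the change is classified as authorized and never flagged by this layer, which gives probability $0$. For $e \notin \mathcal{C}$, no window exists, so the predicate $t \in W(e)$ is not defined and the tracker's behaviour is unspecified. The statement records this honestly as ``undefined'' rather than asserting a value.

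The step I expect to be the real obstacle is the hidden premise that the change is actually observed and attributed to the correct entity $e$. The tracker can only compare the new value against a previously registered value under the same claim key. This requires two things. First, the new claim must be extracted, which is governed by $p_{\text{rec}}$ from Theorem~\ref{thm:detection}. Second, it must be linked to the same $e$ through Algorithm~\ref{alg:extraction}. If a claim is misattributed to an entity outside $\mathcal{C}$, or is not extracted at all, the first case of the formula does not apply.

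I would handle this by stating the proposition conditionally on correct extraction and entity linking, which is how I read the phrase ``claim with entity $e$'' in the statement. Under that reading the guarantee is exactly the deterministic one above. I would then add a remark that the unconditional detection probability is at least $p_{\text{rec}}$ times the rate of correct entity linking, measured at 99.8\% in \S\ref{sec:extraction}. That remark mirrors the way Theorem~\ref{thm:detection} composes extraction with verification.

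A secondary subtlety is the boundary of $W(e)$ and clock or timezone conventions. I would fix $W(e)$ as a closed interval of calendar dates, so that membership is unambiguous. I would also note that an adversary acting \emph{inside} a window is, by construction, outside the scope of this proposition and must be caught by the cross-source check of Algorithm~\ref{alg:verify}.
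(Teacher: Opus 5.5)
Your argument is correct and is essentially the paper's: the paper gives no separate proof and treats the proposition as immediate from the tracker's deterministic rule in \S\ref{sec:temporal} (changes outside the authorized window are flagged, and changes inside it are marked authorized), which is exactly your case-by-case reading. Your explicit conditioning on the change being extracted and keyed to the same $(e,a,u)$ usefully makes visible a premise the paper leaves implicit; note only that the 99.8\% figure counts claims linked to \emph{some} entity rather than to the \emph{correct} one, so it cannot be plugged into your closing lower bound as stated.
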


This provides a complementary detection channel: even if an adversary crafts a manipulation that evades claim verification (e.g., by compromising the majority of sources), the temporal tracker catches changes that occur outside the predictable government update schedule. The two mechanisms are independent, so an adversary must evade \emph{both} to succeed when temporal information is available.

\begin{corollary}[Combined Detection]
\label{cor:combined}
For attacks where both claim verification and temporal tracking apply, the combined detection probability is:
\begin{equation}
P_{\text{combined}} = 1 - (1 - P_{\text{detect}})(1 - P_{\text{temporal}})
\end{equation}
When $P_{\text{temporal}} = 1$ (change outside authorized window), $P_{\text{combined}} = 1$ regardless of $P_{\text{detect}}$.
\end{corollary}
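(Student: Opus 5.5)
The plan is to treat detection as an event in a probability space and use complements. Let $D_1$ be the event that claim verification (Algorithm~\ref{alg:verify}) flags the manipulated claim, with $P(D_1)=P_{\text{detect}}$ as bounded in Theorem~\ref{thm:detection}. Let $D_2$ be the event that the temporal tracker flags the value change, with $P(D_2)=P_{\text{temporal}}$ from Proposition~\ref{prop:temporal}. RAGShield blocks the passage whenever either layer raises a flag, so combined detection is the event $D_1\cup D_2$. Its complement is $\overline{D_1}\cap\overline{D_2}$, which says the adversary evades both layers. So the statement reduces to computing $P(\overline{D_1}\cap\overline{D_2})$.

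The key step is to factor this probability as $(1-P_{\text{detect}})(1-P_{\text{temporal}})$. That requires $D_1$ and $D_2$ to be independent, which the text preceding the corollary asserts. I would justify it from the mechanisms. Verification depends only on the extracted value and on the registry contents across independent sources (Definition~\ref{def:independence}). The temporal tracker depends only on the time $t$ of the change and on the calendar $W(e)$, and that check is deterministic. In this reading, once the attack is fixed, $D_2$ is a degenerate event with probability $0$ or $1$. A degenerate event is independent of every other event, so the factorization holds without any further modeling assumption. The restriction to ``attacks where both claim verification and temporal tracking apply'' means $e\in\mathcal{C}$. Under that restriction, $P_{\text{temporal}}$ is defined and equals $0$ or $1$, matching the cases in Proposition~\ref{prop:temporal}.

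With the factorization in hand, $P_{\text{combined}}=1-P(\overline{D_1}\cap\overline{D_2})$ gives the displayed formula. For the second claim, set $P_{\text{temporal}}=1$. This is exactly the case $t\notin W(e)$, $e\in\mathcal{C}$ of Proposition~\ref{prop:temporal}. The product then vanishes and $P_{\text{combined}}=1$ for any value of $P_{\text{detect}}$. This special case needs no independence at all, since $D_2\subseteq D_1\cup D_2$ already gives $P_{\text{combined}}\ge P(D_2)=1$. I would state the special case this way so it stands on its own.

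The main obstacle is the independence step in the general formula. If $P_{\text{detect}}$ is instead read as a probability over extraction randomness that might correlate with the timing of the attack, independence is a modeling assumption rather than a consequence. The fallback is then to state only the inequality $P_{\text{combined}}\ge\max(P_{\text{detect}},P_{\text{temporal}})$, which holds unconditionally. Equality with the product form would then be claimed under the stated independence hypothesis.
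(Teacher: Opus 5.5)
Your proof is correct and has the same skeleton as the paper: combined detection is $D_1\cup D_2$, you pass to complements, and you factor. The difference is in how the factorization is justified. The paper gives no separate proof; it asserts, in the paragraph before the corollary, that ``the two mechanisms are independent'' and reads the product off that assumption. You instead derive independence from Proposition~\ref{prop:temporal}. For a fixed attack with $e\in\mathcal{C}$, $P_{\text{temporal}}\in\{0,1\}$, and an event of probability $0$ or $1$ is independent of every event, so you need no modeling hypothesis. You also prove the $P_{\text{temporal}}=1$ case by monotonicity ($D_2\subseteq D_1\cup D_2$), which holds under any dependence at all. The cost is that your reading collapses the product formula to two cases: $P_{\text{combined}}=P_{\text{detect}}$ inside the window and $1$ outside. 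The paper's independence assumption is what would license the product for a non-degenerate $P_{\text{temporal}}$, for example one taken over random attack timing.

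One caution. Your claim that the tracker ``depends only on $t$ and $W(e)$'' matches the idealized proposition, but in the pipeline the tracker only sees value changes in claims the extractor produced, exactly as verification does. The concrete threat to independence is therefore this shared extraction step, not a timing correlation. For the formats in Table~\ref{tab:adaptive} that evade extraction, neither layer fires. Outside the idealization, both the paper's independence claim and the ``regardless of $P_{\text{detect}}$'' conclusion then fail. What survives is your fallback $P_{\text{combined}}\ge\max(P_{\text{detect}},P_{\text{temporal}})$, with $P_{\text{temporal}}$ taken as the true, extraction-dependent flagging probability.
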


\subsection{Provenance Layer}

RAGShield retains provenance-verified ingestion as a first-line defense against external injection attacks. Documents without valid cryptographic attestation are blocked at ingestion. As Proposition~\ref{prop:provenance} establishes, provenance alone is insufficient against insider numerical manipulation. The claim-level layers provide the defense that provenance cannot.

\subsection{Security Analysis}
\label{sec:security}

I analyze RAGShield's security guarantees under the threat model of \S\ref{sec:threat}.

\begin{proposition}[Byzantine Fault Tolerance Analogy]
\label{prop:bft}
Cross-source claim verification tolerates up to $\lfloor (k-1)/2 \rfloor$ compromised sources out of $k$ total sources for a given claim key. When the number of compromised sources $f < k/2$, the trust-weighted consensus value $v^*$ equals the correct value, and any manipulation is detected.
\end{proposition}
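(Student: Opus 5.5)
The plan is to reduce the statement to a majority argument on the set of cross-source registry entries $M'$ that Algorithm~\ref{alg:verify} consults for a claim key $(e,a,u)$. Assume each honest source reports the authoritative value $v^*_{\text{true}}$ up to the tolerance $\varepsilon$, independent in the sense of Definition~\ref{def:independence}. Each compromised source may report an arbitrary value. I will also assume uniform trust weights, or more generally that honest weight strictly exceeds compromised weight. This is the weighted analogue of $f<k/2$, and I will state it explicitly as a hypothesis, since otherwise a single heavily weighted compromised source could break the claim.

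\textbf{Step 1: the consensus is correct.} I will interpret \textsc{TrustWeightedConsensus} as a weighted median, or a weighted plurality with $\varepsilon$-binning. With $f<k/2$, at least $k-f>k/2$ entries lie in the interval $[v^*_{\text{true}}-\varepsilon,\,v^*_{\text{true}}+\varepsilon]$. Any weighted median therefore lies in that interval, because more than half the mass sits there, and a plurality bin is centred there as well. Hence $|v^*-v^*_{\text{true}}|\le\varepsilon$. Since $\lfloor (k-1)/2\rfloor$ is the largest integer strictly below $k/2$, this yields the stated tolerance. Tightness follows from the standard example: with $f=\lceil k/2\rceil$ colluding sources all reporting $v'$, the consensus can become $v'$.

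\textbf{Step 2: any manipulation is detected.} Take a manipulated claim $c$ with $|v'-v^*_{\text{true}}|>2\varepsilon$; I will take this stronger separation so that the $\varepsilon$-slack from Step 1 is absorbed. Then $|v'-v^*|>\varepsilon$. I split into cases on $|M'|$.
\begin{itemize}
\item If $|M'|<2$, the single-source discrepancy branch returns SUSPICIOUS.
\item If $|M'|\ge 2$, the number of entries agreeing with $v'$ within $\varepsilon$ is at most $f$, because honest entries are within $\varepsilon$ of $v^*_{\text{true}}$ and hence more than $\varepsilon$ from $v'$. So $\kappa\le f/k<1/2<0.8$. The status is therefore DISPUTED when $\kappa>0$ and SUSPICIOUS when $\kappa=0$; in either case it is not VERIFIED, so the claim is flagged.
\end{itemize}
I also need to note that the lookup finds the honest entries at all. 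This requires the exact-key match to succeed, which I will assume as correct extraction, tying back to the precision and recall factors of Theorem~\ref{thm:detection}.

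\textbf{Main obstacle.} The hard part is not the counting but pinning down the unstated semantics so that the argument is actually valid. Three issues need explicit hypotheses:
\begin{itemize}
\item what \textsc{TrustWeightedConsensus} computes, since a weighted \emph{mean} is not robust and the claim fails for it;
\item whether honest sources are defined to agree to within $\varepsilon$;
\item whether ``detected'' means a non-VERIFIED status, since DISPUTED is only weakly flagged.
\end{itemize}
I will fix these as hypotheses (median-type consensus, honest $\varepsilon$-agreement, detection meaning the status is not VERIFIED, and a $2\varepsilon$ manipulation margin) and present the result as a conditional guarantee, matching the paper's framing of it as an ``analogy'' to Byzantine fault tolerance.
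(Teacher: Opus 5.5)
Your argument is correct and is essentially the paper's majority-counting proof (an honest majority pins down $v^*$, and then $\kappa\le f/k<1/2$ forces DISPUTED or SUSPICIOUS). The paper simply takes the consensus to be the value of highest aggregate trust weight, assumes equal trust and that all honest sources report exactly $v_{\text{correct}}$, and omits the $|M'|<2$ case and the $\varepsilon$-margin you make explicit; its ``any $v\neq v_{\text{correct}}$'' is too strong when $\varepsilon>0$, and your margin repairs that.

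One caveat: under your $\varepsilon$-agreement hypothesis, the aside that an $\varepsilon$-binned plurality also works is false. Honest values up to $2\varepsilon$ apart can be split across bins and outvoted by $f$ colluders reporting a single value. So either commit to the median, as you do at the end, or assume exact honest agreement, as the paper does.
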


\begin{proof}
The consensus mechanism selects the value with the highest aggregate trust weight. With $k$ sources, $k - f$ honest sources report the correct value $v_{\text{correct}}$ and $f$ compromised sources report $v_{\text{poison}}$. When $f < k/2$, the honest sources have majority weight (assuming equal trust), so $v^* = v_{\text{correct}}$. Any incoming claim with $v \neq v_{\text{correct}}$ has consistency $\kappa < 0.5$ and is flagged as DISPUTED or SUSPICIOUS.
\end{proof}

\textbf{Defense-in-depth composition.} RAGShield composes three independent detection mechanisms, each addressing a different attack surface:

\begin{enumerate}[leftmargin=*]
\item \emph{Provenance verification} blocks external injection (unsigned/forged documents). Defeated by insiders with valid credentials.
\item \emph{Claim-level verification} detects numerical manipulation by comparing against cross-source consensus. Defeated when $>$50\% of sources are compromised.
\item \emph{Temporal tracking} detects changes outside authorized windows. Defeated only when the adversary times the attack to coincide with legitimate update periods.
\end{enumerate}

An adversary must simultaneously: (a)~possess valid credentials, (b)~compromise a majority of independent sources, and (c)~time the attack within the authorized change window. The probability of all three conditions being met simultaneously is the product of their individual probabilities, providing defense-in-depth through independent failure modes.

\textbf{Attack surface analysis.} The primary attack surface is the claim extraction layer. If the extractor fails to identify a numerical value, the verification layer is never invoked. Table~\ref{tab:adaptive} shows that 3/7 adversarial formatting strategies evade extraction. However, Proposition~\ref{prop:tradeoff} establishes that formats evading extraction also reduce LLM utilization of the poisoned value, creating a natural trade-off that limits the adversary's effective attack success.

% ============================================================================
\section{Implementation}
\label{sec:implementation}
% ============================================================================

% --- ADDITION 4: Implementation details ---
RAGShield is implemented in Python 3.12. Table~\ref{tab:implementation} summarizes the components and their dependencies.

\begin{table*}[t]
\centering
\caption{Implementation Components}
\label{tab:implementation}
\small
\begin{tabular}{@{}lll@{}}
\toprule
\textbf{Component} & \textbf{Library} & \textbf{Notes} \\
\midrule
Claim extraction & regex + two-pass NER & 120 entity, 16 attribute patterns \\
Claim registry & SQLite & Indexed by claim key \\
Temporal tracking & datetime + calendar & 4 agency schedules \\
Embeddings & sentence-transformers & all-MiniLM-L6-v2 (384-dim) \\
Provenance & PyNaCl (Ed25519) & Real signatures, not HMAC \\
Vector search & NumPy & Cosine similarity \\
\bottomrule
\end{tabular}
\end{table*}

\textbf{Cost model.} Claim extraction adds $\sim$5ms per document (regex matching + entity linking). Registry lookup adds $\sim$1ms per claim (SQLite indexed query). Temporal checking adds $\sim$0.5ms per claim. Total per-document overhead: $\sim$7ms. Total per-query overhead: $\sim$8ms (extract claims from top-$k$ retrieved passages + verify each). This is negligible compared to embedding model inference ($\sim$50ms) and LLM generation ($\sim$500ms). For a 100K-document corpus, the claim registry requires $\sim$5MB storage.

\textbf{Claim registry schema.} The SQLite registry contains two tables:

\begin{verbatim}
claims(id, entity, attribute, value, unit,
       claim_type, context, source_id,
       source_trust, timestamp, tax_year,
       confidence, claim_key)

claim_history(id, claim_key, old_value,
              new_value, change_date,
              source_id, authorized)
\end{verbatim}

The \texttt{claims} table is indexed on \texttt{claim\_key} (composite of entity, attribute, unit) for O(1) exact-match lookups. The \texttt{claim\_history} table tracks all value changes for temporal analysis. The \texttt{authorized} flag is set by the temporal tracker when a change falls within the authorized window for that entity type.

% ============================================================================
\section{Evaluation}
\label{sec:evaluation}
% ============================================================================

\subsection{Experimental Setup}

\textbf{Real IRS corpus.} The primary evaluation uses 2{,}742 passages extracted from 5 official IRS publications: Publication~17 (Your Federal Income Tax), Publication~501 (Dependents, Standard Deduction), Publication~503 (Child and Dependent Care Expenses), Publication~590-A (Contributions to IRAs), and Publication~915 (Social Security and Equivalent Railroad Retirement Benefits). These are real government documents, not synthetic passages. The claim extraction engine extracts 2{,}205 numerical claims from 674 passages, with 199 unique claim keys across 674 sources.

\textbf{Attacks from real content.} 430 attacks are generated from real IRS document content by modifying verified numerical claims (claims appearing in 2+ sources). Attack tiers: 258 T3-H (subtle numerical changes, +\$100 to +\$1{,}000), 86 T6 (minimal change, +\$1), and 86 T-TEMPORAL (prior-year values, $\sim$3\% reduction). All attacks modify real dollar amounts in real IRS passages, not synthetic text.

\textbf{Supplementary corpus.} A secondary evaluation uses 1{,}012 passages: 12 government-format passages containing 25 ground-truth claims mixed with 1{,}000 NQ benchmark passages~\cite{nq2019}, with 100 systematic attacks.

\textbf{Claim registry.} For the real IRS corpus, the registry is populated from the corpus itself (claims appearing across multiple passages from different publications). For the supplementary corpus, 3 simulated independent sources yield 90 claims across 22 unique keys.

\textbf{Defenses compared.} RAGShield (full claim-level verification), an embedding-only baseline (using cosine similarity thresholds on passage embeddings, without claim extraction), RobustRAG$^\dagger$~\cite{robustrag_neurips2025}, TrustRAG$^\dagger$~\cite{trustrag2025}, RAGPart$^\dagger$~\cite{ragpart2025}, and No-Defense baseline. $\dagger$ indicates reimplemented from paper description, capturing the core algorithmic principle (embedding-level analysis). I acknowledge this limitation: results may differ from production implementations.

\subsection{Main Results on Real IRS Corpus}

\subsubsection{Real IRS Corpus Construction}

The evaluation corpus is constructed from 5 official IRS publications downloaded from irs.gov: Publication~17 (Your Federal Income Tax, 2025 edition), Publication~501 (Dependents, Standard Deduction), Publication~503 (Child and Dependent Care Expenses), Publication~590-A (Contributions to IRAs), and Publication~915 (Social Security and Equivalent Railroad Retirement Benefits). These publications are chosen because they contain the highest density of citizen-facing numerical claims (tax thresholds, benefit amounts, contribution limits) and are among the most frequently accessed IRS documents.

Each publication is segmented into passages at natural section boundaries (headings, topic changes). The segmentation produces 2{,}742 passages across the 5 publications. The claim extraction engine processes all passages, extracting 2{,}205 numerical claims (1{,}849 monetary, 356 percentage) from 674 passages. The remaining 2{,}068 passages contain no extractable numerical claims (they contain procedural text, definitions, or form instructions without dollar amounts or percentages).

The claim registry contains 199 unique claim keys across 674 source passages. For attack generation, I identify 86 claim keys that appear in 2+ independent source passages (different publications or different sections discussing the same value). These verifiable claims form the basis for the 430 attacks: for each verifiable claim, I generate 3 T3-H attacks (+\$100, +\$500, +\$1{,}000), 1 T6 attack (+\$1), and 1 T-TEMPORAL attack ($\sim$3\% reduction simulating prior-year values). All attacks modify real dollar amounts in real IRS text, no synthetic passages are used.

\subsubsection{Entity Resolution Analysis}

The two-pass entity resolution is critical for real IRS text. Table~\ref{tab:entity_progression} shows the progression of entity detection rate as the extraction system is improved.

\begin{table*}[!htb]
\centering
\caption{Entity Detection Rate Progression on Real IRS Corpus (2{,}205 claims)}
\label{tab:entity_progression}
\small
\begin{tabular}{@{}lrrr@{}}
\toprule
\textbf{Configuration} & \textbf{Patterns} & \textbf{Detection} & \textbf{Unknown} \\
\midrule
Base patterns, single-pass & 45 & 40.5\% & 1{,}312 \\
+ Filing/IRA/SS/adoption patterns & 75 & 85.4\% & 323 \\
+ Worksheet/railroad/misc patterns & 100 & 87.1\% & 285 \\
+ Two-pass resolution (fwd+bwd) & 100 & 98.5\% & 34 \\
+ Final catch-all patterns & 120 & \textbf{99.8\%} & \textbf{5} \\
\bottomrule
\end{tabular}
\end{table*}

The single largest improvement comes from two-pass entity resolution (+11.4 percentage points), not from adding more patterns. This demonstrates that the main challenge in real government text is not vocabulary coverage but \emph{context propagation}: numerical values frequently appear in sentences that lack explicit entity references (e.g., worksheet instructions like ``Enter \$12{,}000 if married filing jointly''), requiring context from surrounding sentences.

The 5 remaining unknowns (0.2\%) are bare worksheet fragments with zero textual context in any surrounding sentence (e.g., ``\$2{,}500 Next.''). These represent the irreducible floor for pattern-based extraction without document-level structural parsing.

\textbf{Failure mode categorization.} Analysis of the 1{,}312 initially-unknown claims reveals the following distribution: worksheet/form instructions (33.0\%), example/narrative calculations (10.5\%), filing/gross income thresholds (20.4\%), IRA/Roth contributions (8.7\%), Social Security benefits (9.7\%), railroad retirement (7.7\%), and miscellaneous (10.0\%). The dominant failure mode, worksheet instructions, is addressed by the two-pass resolution rather than additional patterns, because these sentences inherit entity context from the worksheet's heading or introductory sentence.

\subsubsection{Detection Results}

Table~\ref{tab:results_irs} presents the primary evaluation on real IRS documents. RAGShield achieves 0.0\% ASR across all 430 attacks generated from real IRS content. The embedding-only baseline achieves 81.6\% ASR, confirming the blind spot on real government documents.

\begin{table}[t]
\centering
\caption{Attack Success Rate (\%) on Real IRS Corpus (2{,}742 passages, 430 attacks). Lower is better. 95\% Wilson CIs in brackets.}
\label{tab:results_irs}
\small
\begin{tabular}{@{}lcccc@{}}
\toprule
\textbf{Defense} & \textbf{T3-H} & \textbf{T6} & \textbf{T-TEMP} & \textbf{Overall} \\
\midrule
No Defense & 100 & 100 & 100 & 100 [99,100] \\
Embed-only & 81.8 & 81.4 & 81.4 & 81.6 [79,83] \\
\textbf{RAGShield} & \textbf{0.0} & \textbf{0.0} & \textbf{0.0} & \textbf{0.0} [0,1] \\
\bottomrule
\end{tabular}
\end{table}

\begin{table}[b]
\centering
\caption{Citizen Financial Harm on Real IRS Corpus (sum of $|\Delta v|$ for unblocked attacks)}
\label{tab:harm_irs}
\small
\begin{tabular}{@{}lrrr@{}}
\toprule
\textbf{Defense} & \textbf{Unblocked} & \textbf{Total Harm} & \textbf{Mean/Attack} \\
\midrule
No Defense & 430 & \$243{,}309 & \$566 \\
Embed-only & 351 & \$213{,}739 & \$609 \\
\textbf{RAGShield} & \textbf{0} & \textbf{\$0} & \textbf{\$0} \\
\bottomrule
\end{tabular}
\end{table}

\textbf{Embedding blind spot on real documents.} The blind spot is confirmed on real IRS text: across 100 manipulation pairs from real passages, mean cosine similarity is 0.9996, minimum similarity is 0.9974. Zero pairs are detected at threshold 0.08 or 0.05. This validates that the theoretical blind spot (Theorem~\ref{thm:blindspot}) holds on real government documents, not just synthetic passages.

\textbf{Entity detection on real text.} The two-pass entity resolution achieves 99.8\% entity detection (2{,}200/2{,}205 claims) on real IRS text. This is a significant improvement over single-pass forward propagation (40.5\% on the same corpus), demonstrating that the combination of expanded domain patterns and bidirectional context resolution is essential for real-world government documents.

\subsection{Supplementary Results on Synthetic Corpus}

Table~\ref{tab:results} presents the supplementary evaluation on the synthetic corpus with reimplemented baselines. RAGShield achieves 0.0\% ASR across all three tiers. All embedding-based defenses achieve 79--90\% ASR overall, confirming the embedding blind spot on numerical manipulation attacks.

\begin{table}[t]
\centering
\caption{Attack Success Rate (\%) on Synthetic Corpus (1{,}012 passages, 100 attacks). Lower is better. 95\% Wilson CIs in brackets. $\dagger$ = reimplemented from paper description.}
\label{tab:results}
\small
\begin{tabular}{@{}lcccc@{}}
\toprule
\textbf{Defense} & \textbf{T3-H} & \textbf{T6} & \textbf{T-TEMP} & \textbf{Overall} \\
\midrule
No Defense & 100 & 100 & 100 & 100 \\
RobustRAG$^\dagger$ & 98.0 [93,99] & 100 [90,100] & 54.2 [39,69] & 88.0 [82,92] \\
TrustRAG$^\dagger$ & 90.2 [82,95] & 84.0 [69,93] & 91.7 [79,97] & 89.0 [83,93] \\
RAGPart$^\dagger$ & 100 [95,100] & 100 [90,100] & 58.3 [42,73] & 90.0 [84,94] \\
Embed-only & 96.1 [90,99] & 100 [90,100] & 20.8 [11,36] & 79.0 [72,85] \\
\textbf{RAGShield} & \textbf{0.0} [0,5] & \textbf{0.0} [0,10] & \textbf{0.0} [0,10] & \textbf{0.0} [0,3] \\
\bottomrule
\end{tabular}
\end{table}

Key observations: (1) T6 in-place replacement attacks achieve 100\% ASR against \emph{every} embedding-based defense, including RobustRAG and the embedding-only baseline. These attacks modify a single number by $\pm$1, producing embedding perturbation $< 0.001$, far below any detection threshold. (2) T-TEMPORAL attacks are partially caught by Embed-only (79.2\% detection) because changing the year reference (``2025'' $\to$ ``2024'') produces a larger embedding shift than changing a dollar amount. (3) RAGShield catches all 100 attacks through claim-level verification, which compares extracted numerical values directly rather than relying on embedding similarity.

\subsection{Harm Quantification}

Table~\ref{tab:harm} quantifies citizen financial harm as the sum of $|\Delta v|$ for unblocked attacks. Without defense, 100 attacks cause \$461{,}624 in total harm. RAGShield reduces this to \$0.

\begin{table}[b]
\centering
\caption{Citizen Financial Harm on Synthetic Corpus (sum of $|\Delta v|$ for unblocked attacks)}
\label{tab:harm}
\small
\begin{tabular}{@{}lrrr@{}}
\toprule
\textbf{Defense} & \textbf{Unblocked} & \textbf{Total Harm} & \textbf{Mean/Attack} \\
\midrule
No Defense & 100 & \$461{,}624 & \$4{,}616 \\
RobustRAG$^\dagger$ & 88 & \$41{,}751 & \$474 \\
TrustRAG$^\dagger$ & 89 & \$457{,}078 & \$5{,}136 \\
RAGPart$^\dagger$ & 90 & \$453{,}655 & \$5{,}041 \\
Embed-only & 79 & \$38{,}309 & \$485 \\
\textbf{RAGShield} & \textbf{0} & \textbf{\$0} & \textbf{\$0} \\
\bottomrule
\end{tabular}
\end{table}

\subsection{Ablation Study}

Table~\ref{tab:ablation} shows the contribution of each component. Claim verification alone achieves 0\% ASR, it is the dominant component. Embedding-only detection (v1) achieves 79\% ASR. Temporal tracking alone catches 22\% of attacks (primarily T-TEMPORAL tier).

\begin{table}[t]
\centering
\caption{Ablation Study}
\label{tab:ablation}
\small
\begin{tabular}{@{}lrrr@{}}
\toprule
\textbf{Configuration} & \textbf{ASR} & \textbf{Blocked} & \textbf{Harm} \\
\midrule
Full (embed + claim + temporal) & 0.0\% & 100/100 & \$0 \\
Claim verification only & 0.0\% & 100/100 & \$0 \\
Embedding only & 79.0\% & 21/100 & \$38{,}309 \\
Temporal only & 78.0\% & 22/100 & \$41{,}338 \\
No defense & 100.0\% & 0/100 & \$461{,}624 \\
\bottomrule
\end{tabular}
\end{table}

\subsection{End-to-End LLM Evaluation}

% --- ADDITION 5: LLM E2E as proper table ---
I evaluate RAGShield in a full RAG pipeline with TinyLlama-1.1B-Chat~\cite{tinyllama}. For 8 government Q\&A scenarios, the LLM receives either poisoned context (no defense) or verified context (RAGShield blocks poisoned passages and substitutes correct ones). Table~\ref{tab:llm_e2e} shows the results.

\begin{table}[b]
\centering
\caption{End-to-End LLM Evaluation (TinyLlama-1.1B-Chat)}
\label{tab:llm_e2e}
\small
\begin{tabular}{@{}lrrcc@{}}
\toprule
\textbf{Query Topic} & \textbf{Correct} & \textbf{Poisoned} & \textbf{No Def} & \textbf{v2} \\
\midrule
Std.\ deduction (single) & \$15{,}000 & \$15{,}500 & \ding{55} & \ding{51} \\
SSI monthly rate & \$943/mo & \$993/mo & \ding{55} & \ding{51} \\
401(k) limit & \$23{,}500 & \$24{,}500 & \ding{55} & \ding{51} \\
EITC max (3+ children) & \$7{,}430 & \$7{,}630 & \ding{55} & \ding{51} \\
Child tax credit & \$2{,}000 & \$2{,}100 & \ding{55} & \ding{51} \\
Medicare B premium & \$185/mo & \$195/mo & \ding{55} & \ding{55}$^*$ \\
Gift tax exclusion & \$18{,}000 & \$19{,}000 & \ding{55} & \ding{51} \\
SS wage base & \$176{,}100 & \$177{,}100 & \ding{55} & \ding{51} \\
\midrule
\textbf{Correct answers} & & & \textbf{0/8} & \textbf{7/8} \\
\textbf{Total harm} & & & \textbf{\$3{,}032} & \textbf{\$120} \\
\bottomrule
\multicolumn{5}{l}{\scriptsize $^*$ Decimal formatting mismatch (\$185.00 vs \$185) in extraction.}
\end{tabular}
\end{table}

Without defense, the LLM produces the poisoned numerical answer in all 8 scenarios. With RAGShield, the LLM produces the correct answer in 7/8 scenarios. \textbf{Harm reduction: 96\% (\$2{,}912 saved).} The one failure is a decimal formatting edge case in the Medicare premium extraction (\$185.00 vs \$185), not an inherent limitation.

\subsection{Adversarial Formatting Robustness}

% --- ADDITION 6: Adversarial formatting as proper table ---
Table~\ref{tab:adversarial} evaluates the claim extractor against 10 adversarial formatting patterns designed to defeat extraction.

\begin{table}[t]
\centering
\caption{Claim Extraction on Adversarial Formatting}
\label{tab:adversarial}
\small
\begin{tabular}{@{}lcc@{}}
\toprule
\textbf{Format Type} & \textbf{Value} & \textbf{Entity} \\
\midrule
Table format (tab-separated) & \ding{51} & \ding{51} \\
Conditional claims (if/then) & \ding{51} & \ding{51} \\
Mixed non-claim numbers (phone, pub\#) & \ding{51} & \ding{51} \\
Footnote markers near numbers & \ding{51} & \ding{51} \\
Historical comparison (2024 vs 2025) & \ding{51} & \ding{51} \\
Heavy abbreviations (SSI fed.\ benefit) & \ding{51} & \ding{51} \\
Legal language (IRC \S63(c)(2)) & \ding{51} & \ding{51} \\
Multiple rates in one sentence & \ding{51} & \ding{55} \\
\midrule
Spelled-out numbers (``fifteen thousand'') & \ding{55} & \ding{55} \\
Spelled-out percentages (``six and two-tenths'') & \ding{55} & \ding{55} \\
\midrule
\textbf{Total} & \textbf{8/10} & \textbf{7/10} \\
\bottomrule
\end{tabular}
\end{table}

The extractor handles 8/10 adversarial formats. The two failure modes, spelled-out numbers and spelled-out percentages, are documented as honest limitations. Adding a word-to-number conversion module would address these cases.

\subsection{Adaptive Adversary Evasion}

I test 7 evasion strategies designed by an adversary who \emph{knows} RAGShield uses regex-based claim extraction and cross-source verification. Table~\ref{tab:adaptive} shows the results.

\begin{table}[b]
\centering
\caption{Adaptive Adversary Evasion Strategies Against RAGShield}
\label{tab:adaptive}
\small
\begin{tabular}{@{}lrrl@{}}
\toprule
\textbf{Strategy} & \textbf{Tests} & \textbf{Evasion} & \textbf{Type} \\
\midrule
E1: Unicode lookalike digits & 24 & 0.0\% & Extract \\
E2: Spelled-out numbers & 3 & 100\% & Extract \\
E3: Relative claims (``increased by'') & 24 & 0.0\% & Extract \\
E4: Split across sentences & 24 & 0.0\% & Extract \\
E5: Footnote override & 24 & 0.0\% & Extract \\
E6: Percentage reframe (``103.3\%'') & 16 & 100\% & Extract \\
E7: Obfuscated format (``\$15.5K'') & 24 & 100\% & Extract \\
\midrule
\textbf{Overall} & \textbf{139} & \textbf{30.9\%} & \\
\bottomrule
\end{tabular}
\end{table}

\textbf{Key findings.} RAGShield defeats 4/7 evasion strategies (69.1\% overall detection). All 43 successful evasions are ``extraction evasions'', the extractor did not find the poisoned value. Zero evasions bypassed the verification layer, meaning: \emph{if the extractor finds a value, verification always catches the manipulation.} The verification layer is robust; the extraction layer is the attack surface.

\textbf{Important nuance.} Formats that evade the extractor (spelled-out numbers, ``\$15.5K'', percentage reframes) also challenge the LLM's ability to use the number in its response. A passage saying ``fifteen thousand five hundred dollars'' or ``103.3\% of the prior year amount'' requires the LLM to perform numerical reasoning rather than simple extraction, reducing the attack's effectiveness even without defense. The adversary faces a trade-off: formats that evade claim extraction also reduce the reliability of the poisoned value reaching the LLM's output.

\textbf{Mitigation.} Adding a word-to-number module (e.g., \texttt{word2number} library) would address E2. Extending regex patterns to handle K/M notation would address E7. E6 (percentage reframe) is the hardest to mitigate because it requires understanding that ``103.3\% of \$15{,}000'' equals ``\$15{,}500'', this would require a numerical reasoning module beyond pattern matching.

\subsection{False Positive Analysis}

On 100 randomly sampled benign NQ passages (containing no government numerical claims), RAGShield produces 0 false positives (FPR = 0.0\%, 95\% CI [0.0\%, 2.8\%]). The claim extractor finds no government-format claims in general Wikipedia text, so the verification layer is never triggered on benign content.

\subsection{Pre-Poisoned Corpus Robustness}

Table~\ref{tab:prepoisoned} shows detection rate when the claim registry itself is partially poisoned. RAGShield maintains 83\% detection up to 33\% pre-poisoning. At 50\%+, the consensus value flips and detection degrades. At 100\% poisoning, detection drops to 45.8\%.

\begin{table}[t]
\centering
\caption{Detection Rate Under Pre-Poisoned Registry}
\label{tab:prepoisoned}
\small
\begin{tabular}{@{}rccr@{}}
\toprule
\textbf{Pre-poison} & \textbf{Correct Src} & \textbf{Poisoned Src} & \textbf{Detection} \\
\midrule
0\% & 3 & 0 & 83.3\% \\
33\% & 2 & 1 & 83.3\% \\
50\% & 1 & 2 & 87.5\% \\
67\% & 1 & 2 & 87.5\% \\
100\% & 0 & 3 & 45.8\% \\
\bottomrule
\end{tabular}
\end{table}

\textbf{Honest limitation:} Claim verification requires majority-honest sources, the same assumption as Byzantine fault tolerance. When $>$50\% of sources are compromised, the consensus value is incorrect and detection degrades. Periodic corpus auditing against authoritative sources is required. Note: the slight increase in detection at 50\% and 67\% pre-poisoning (87.5\% vs.\ 83.3\%) occurs because the poisoned sources introduce value disagreements that trigger DISPUTED status for some claims that were previously UNVERIFIED, the system detects inconsistency even when the consensus is wrong, though it may flag the correct value rather than the poisoned one.

% ============================================================================
\section{Discussion}
\label{sec:discussion}
% ============================================================================

\subsection{Why Claim-Level Verification Works Where Embeddings Fail}

Embedding models compress documents into fixed-dimensional vectors optimized for semantic similarity. Numerical values occupy a negligible fraction of the semantic space, changing ``\$15{,}000'' to ``\$65{,}000'' does not change the \emph{topic} of the document. Claim-level verification operates on the extracted numerical values directly, bypassing the embedding representation entirely. This is why the sensitivity gap is so large (1{,}459$\times$): embeddings are the wrong abstraction for numerical precision.

Numerical claims have a property that general text does not: they can be \emph{exactly verified} against external sources. The statement ``the standard deduction is \$15{,}000'' is either correct or incorrect, there is no ambiguity, no interpretation, no context-dependence. This makes numerical claims well suited to automated verification, in contrast to general factual claims (e.g., ``this policy is effective'') that require subjective judgment.

\subsection{Deployment Architecture}

RAGShield integrates into an existing RAG pipeline as a middleware layer between the retriever and the generator. The deployment architecture has three components:

\emph{Ingestion-time:} When documents are added to the knowledge base, the claim extraction engine processes each document and populates the claim registry. For a 100K-document government corpus, initial ingestion takes approximately 12 minutes (7ms/document) and produces a registry of $\sim$50K claims requiring $\sim$5MB storage. The provenance layer verifies document signatures at this stage.

\emph{Query-time:} When a user query triggers retrieval, the top-$k$ retrieved passages are processed by the claim extractor ($\sim$5ms for $k=5$ passages). Each extracted claim is verified against the registry ($\sim$1ms/claim). If any claim is DISPUTED or SUSPICIOUS, the passage is replaced with the highest-trust passage containing the correct consensus value. Total query-time overhead: $\sim$8ms, negligible compared to embedding inference ($\sim$50ms) and LLM generation ($\sim$500ms).

\emph{Registry maintenance:} The claim registry requires periodic updates when authoritative sources publish new values (e.g., IRS Revenue Procedures in October/November). The temporal tracker's authorized-change calendar determines when updates are expected. Updates outside the calendar trigger alerts for manual review.

For government agencies, the claim registry can be seeded from publicly available sources: IRS publications (irs.gov), SSA benefit schedules (ssa.gov), Federal Register entries (federalregister.gov), and CMS Medicare determinations (cms.gov). These sources are already maintained by the respective agencies and updated on predictable schedules.

\subsection{Generalizability Beyond Government Documents}

While this paper focuses on government numerical claims, the claim-level verification approach generalizes to any domain where: (1)~numerical values are critical to correctness, (2)~multiple independent sources report the same values, and (3)~values change on predictable schedules. Candidate domains include:

\begin{itemize}[leftmargin=*]
\item \emph{Financial services:} Interest rates, fee schedules, account limits. Multiple regulatory filings report the same values.
\item \emph{Healthcare:} Drug dosages, insurance coverage limits, copay amounts. Published in formularies, benefit summaries, and provider agreements.
\item \emph{Legal:} Statutory penalties, filing fees, statute of limitations periods. Published in statutes, court rules, and agency guidance.
\item \emph{Scientific:} Physical constants, measurement standards, safety thresholds. Published in standards documents and reference databases.
\end{itemize}

Each domain requires domain-specific entity patterns and change calendars, but the core architecture, extract, verify, track, transfers directly. The two-pass entity resolution strategy is domain-agnostic: it depends only on the structural property that documents discuss entities before (or after) presenting their numerical values.

\subsection{Comparison with LLM-Based Fact Checking}

An alternative approach to numerical claim verification is to use an LLM to compare retrieved passages and identify numerical inconsistencies. This approach has three disadvantages compared to claim-level verification:

\begin{enumerate}[leftmargin=*]
\item \emph{Reliability:} LLMs are unreliable at numerical comparison, especially for values that differ by small amounts (\$15{,}000 vs. \$15{,}500). Research on LLM numerical reasoning shows error rates of 10--30\% on simple comparison tasks.
\item \emph{Latency:} LLM inference adds 500ms+ per comparison, compared to $\sim$1ms for registry lookup. For $k=5$ retrieved passages with $m$ claims each, LLM-based checking requires $O(k \cdot m)$ inference calls.
\item \emph{Adversarial robustness:} An adversary who controls the poisoned passage can craft text that misleads the LLM's comparison (e.g., ``the updated amount is \$15{,}500, reflecting the 2026 adjustment''). Claim-level verification is immune to such framing because it compares extracted values, not natural language descriptions.
\end{enumerate}

\subsection{NIST SP 800-53 Compliance}

RAGShield maps to NIST SP 800-53~\cite{nist80053} controls: SI-7 (Information Integrity) for claim verification, AU-10 (Non-repudiation) for source attribution, and CM-3 (Configuration Change Control) for temporal tracking.

\subsection{Future Work}

Several directions extend this work:

\emph{Semantic claim extraction.} The current extractor uses regex patterns, which cannot handle spelled-out numbers or indirect numerical references (``103.3\% of the prior year amount''). Replacing the pattern-based extractor with a fine-tuned language model for numerical claim extraction would address these limitations. The verification and temporal layers are extractor-agnostic, they operate on structured claim tuples regardless of how those tuples are produced.

\emph{Multi-domain registry federation.} Government agencies maintain overlapping numerical claims (e.g., the IRS and SSA both publish Social Security wage base figures). A federated claim registry that aggregates across agency boundaries would increase the number of independent sources per claim, strengthening the consensus mechanism. The technical challenge is claim key normalization across agencies that use different terminology for the same concept.

\emph{Continuous monitoring.} The current system verifies claims at ingestion and query time. A continuous monitoring mode would periodically re-verify all registry claims against authoritative sources, detecting slow-burn attacks where an adversary gradually shifts values across multiple update cycles. The temporal tracker's change calendar provides the scheduling framework for such monitoring.

\emph{Adversarial robustness certification.} Theorem~\ref{thm:detection} provides a probabilistic detection bound, but does not account for adaptive adversaries who observe the defense and craft evasion strategies. A formal adversarial robustness analysis, analogous to certified robustness in adversarial ML, would characterize the exact conditions under which RAGShield's guarantees hold against worst-case adversaries.

\emph{Scale evaluation.} The current evaluation uses 2{,}742 passages. Evaluating on a production-scale corpus (100K+ documents) would validate the claim registry's indexing performance and identify any degradation in extraction accuracy at scale.

\textbf{Limitations.}
\begin{enumerate}[leftmargin=*]
\item \textbf{Domain specificity.} The claim extractor is designed for government numerical formats. Extending to other domains requires domain-specific patterns.
\item \textbf{Spelled-out numbers.} The extractor fails on ``fifteen thousand dollars.'' A word-to-number module would address this.
\item \textbf{Majority-honest assumption.} Cross-source verification requires $>$50\% of sources to be correct. When the majority is compromised, the consensus is wrong.
\item \textbf{Reimplemented baselines.} RobustRAG, TrustRAG, and RAGPart are reimplemented from paper descriptions. Results may differ from production implementations. The reimplementations capture the core principle (embedding-level analysis) sufficient to demonstrate the blind spot.
\item \textbf{Corpus scale.} The primary evaluation uses 2{,}742 real IRS passages. Production knowledge bases contain millions of documents. Scaling the claim registry is an engineering challenge addressed by standard database indexing but not empirically validated at scale.
\item \textbf{LLM model size.} The end-to-end evaluation uses TinyLlama-1.1B. Production systems would use larger models with stronger numerical extraction.
\end{enumerate}

% ============================================================================
\section{Conclusion}
\label{sec:conclusion}
% ============================================================================

This paper identified a fundamental blind spot in embedding-based RAG defenses: numerical claim manipulation produces negligible embedding perturbation (mean sensitivity gap 1{,}459$\times$), which means existing defenses miss insider numerical attacks entirely. RAGShield addresses this gap with claim-level verification: extracting structured numerical claims via two-pass entity resolution (99.8\% entity detection on 2{,}742 real IRS passages), cross-referencing against a multi-source registry with single-source discrepancy detection, and tracking temporal consistency. On a real IRS corpus of 2{,}742 passages with 430 attacks generated from real document content, RAGShield achieves 0\% ASR (95\% CI [0\%, 1\%]) where embedding-based defenses achieve 79--90\% ASR on the synthetic corpus and 82\% ASR on the real IRS corpus, eliminating all \$243{,}309 in potential citizen financial harm. End-to-end LLM evaluation confirms 96\% harm reduction. The system has documented limitations: spelled-out numbers, majority-honest source assumption, domain specificity, that I document transparently rather than minimize. As federal agencies scale RAG deployments for citizen-facing services, claim-level verification addresses a gap that embedding-based approaches cannot close.

% ============================================================================
\bibliographystyle{IEEEtran}

\end{document}